%% file: main.tex
\documentclass{article}

\PassOptionsToPackage{numbers, compress}{natbib}
 \usepackage[preprint]{neurips_2026}

\usepackage[utf8]{inputenc} 
\usepackage[T1]{fontenc}    
\usepackage{hyperref}       
\usepackage{url}            
\usepackage{booktabs}       
\usepackage{amsfonts}       
\usepackage{nicefrac}       
\usepackage{microtype}      
\usepackage{xcolor}         
\usepackage{enumerate}
\usepackage{enumitem}
\usepackage{caption} 
\usepackage{wrapfig}

\usepackage{microtype}
\usepackage{graphicx}
\usepackage{subcaption}
\usepackage{booktabs}

\usepackage{hyperref}

\usepackage{amsmath,amssymb}
\usepackage{algorithmic}  
\newcounter{algctr}
\renewcommand{\thealgctr}{\arabic{algctr}}
\newcommand{\algcaption}[1]{%
  \refstepcounter{algctr}%
  \noindent\textbf{Algorithm \thealgctr}: #1%
}

\usepackage{amsmath}
\usepackage{amssymb}
\usepackage{mathtools}
\usepackage{amsthm}
\usepackage{amsfonts}       

\usepackage[capitalize,noabbrev]{cleveref}

\theoremstyle{plain}
\newtheorem{theorem}{Theorem}[section]
\newtheorem{proposition}[theorem]{Proposition}

\theoremstyle{definition}

\theoremstyle{remark}

\usepackage[textsize=tiny]{todonotes}

\title{Pushing Biomolecular Utility-Diversity Frontiers with Supergroup Relative Policy Optimization}

\author{%
  \textbf{Xinwu Ye}$^{1,2,3}$\thanks{Equal contribution. Email: \texttt{xinwuye43@connect.hku.hk}}
  \quad
  \textbf{He Cao}$^{2}$\footnotemark[1]
  \quad
  \textbf{Hao Li}$^{4}$
  \quad
  \textbf{Bin Feng}$^{2}$ \\
  \textbf{Zijing Liu}$^{2}$
  \quad
  \textbf{Xiangru Tang}$^{5}$
  \quad
  \textbf{Yu Li}$^{2}$\thanks{Corresponding authors.}
  \quad
  \textbf{Shenghua Gao}$^{1}$\footnotemark[2] \\
  \\[-1mm]
  $^{1}$The University of Hong Kong \\
  $^{2}$International Digital Economy Academy \\
  $^{3}$Beijing Institute of Collaborative Innovation \\
  $^{4}$Peking University \\
  $^{5}$Yale University
}

\begin{document}

\maketitle

\begin{abstract}
Biomolecular generators are often adapted with reward feedback to improve task-specific utility, but pushing utility alone can concentrate generation on a narrow family of candidates. Maintaining diversity is difficult because sample diversity is a set-level property. We introduce Supergroup Relative Policy Optimization (SGRPO), a flexible GRPO-style framework that directly constructs rewards from set-level diversity. For each condition, SGRPO samples a supergroup of candidate sets, compares their diversity under the same condition, and redistributes the group diversity reward to individual rollouts through leave-one-out diversity contributions before combining it with rollout-level utility. This design decouples SGRPO from a particular generator, utility reward, or diversity metric, and allows instantiation with different GRPO-style approaches. We evaluate SGRPO on \emph{de novo} small-molecule design, pocket-based small-molecule design, and \emph{de novo} protein design, instantiating it with both GRPO and Coupled-GRPO across autoregressive and discrete diffusion generators. Across decoding sweeps, SGRPO expands the utility-diversity Pareto frontier and achieves the best frontier-level metrics relative to pretrained generators, GRPO, and memory-assisted GRPO when applicable. Our analyses further show that direct set-level diversity rewards remain effective with small groups and help preserve broader generation-distribution coverage during post-training. 
The code is available at \url{https://anonymous.4open.science/r/SGRPO/README.md}.
\end{abstract}

\input{sec/intro}

\input{sec/related}

\input{sec/setup}

\input{sec/method}

\input{sec/exp}

\input{sec/analysis}

\input{sec/conclusion}

\newpage
\bibliographystyle{plainnat}
\bibliography{ref}

\newpage
\appendix
\onecolumn

\input{sec/appendix}


\end{document}

%% file: sec/intro.tex
\section{Introduction}

Biomolecular generation aims to produce candidates that satisfy chemical or biological design objectives, and reinforcement learning (RL) provides a natural framework for post-training pretrained generators from reward feedback toward desired properties, structures, or functions~\citep{olivecrona2017molecular}. In practice, however, generation quality is not determined by utility alone. A model that maximizes a property score, docking proxy, or protein-level objective may concentrate probability mass on a narrow family of candidates, while a highly diverse generator may fail to deliver enough high-utility samples. This creates a utility-diversity trade-off: different downstream settings may prefer different operating points, often modulated by decoding choices such as temperature, so the relevant objective is not a single best reward value but an improved Pareto frontier of attainable utility-diversity pairs. While many successful RL approaches are closely tailored to specific model classes, molecular or protein representations, and design settings~\citep{olivecrona2017molecular,you2018graph,ektefaie2024reinforcement,wang2025proteinzero}, our goal is a broadly applicable post-training principle. We evaluate it across different generator families, conditioning settings, utility functions, and diversity metrics, while leaving broader task-specific instantiations to future work.

A more broadly applicable class of diversity-aware RL methods encourages diversity through memory- or history-dependent novelty penalties~\citep{blaschke2020memory,loeffler2024reinvent}. Such methods down-weight candidates that are too similar to previously sampled molecules, scaffolds, clusters, or neighborhoods, and can be effective in practice. However, novelty relative to past samples is only an indirect surrogate for the diversity of the current candidate set produced under a given condition. As a result, these methods may over-penalize useful high-density modes or induce distributional drift during post-training. More fundamentally, the target quantity itself is set-level: diversity is defined over collections of samples, whereas policy optimization updates individual rollouts. This raises the central question of this paper: \emph{can we optimize sample-set diversity directly, as a first-class objective, while still assigning useful credit to individual generated candidates?}

We address this with Supergroup Relative Policy Optimization (SGRPO), a simple framework for directly optimizing sample-set diversity together with rollout-level utility. For each condition, SGRPO samples multiple candidate sets from the current policy, scores each set using a user-specified diversity metric, and compares sets only against other sets generated under the same condition. To make this set-level signal actionable for policy learning, SGRPO redistributes each set’s diversity reward to its members through leave-one-out diversity contributions, so candidates that genuinely support set diversity receive stronger credit. The resulting \emph{supergroup}-relative advantage can be instantiated with different GRPO-style optimizers.

We instantiate SGRPO with two GRPO-style optimizers and evaluate it on three biomolecular generation settings: unconditional \emph{de novo} small-molecule design with GenMol~\citep{lee2025genmol}, pocket-based small-molecule design with GenMol-P, our pocket-conditioned variant of GenMol, and unconditional \emph{de novo} protein design with ProGen2~\citep{nijkamp2023progen2}. Across decoding sweeps, SGRPO consistently improves the attainable utility-diversity Pareto frontier over pretrained generators, GRPO, and memory-assisted GRPO baselines when applicable. It remains effective even with small group sizes and better preserves generation-distribution coverage during post-training, showing that directly optimizing set-level diversity can yield robust gains across both molecule and protein generation.

%% file: sec/related.tex
\section{Related Work}

\subsection{Objective Optimization in Biomolecular Generation}
\label{subsec:objective_optimization}

Objective optimization in biomolecular generation is commonly approached either by conditioning or guiding generators toward desired properties, structures, or functions~\citep{lim2018molecular,kotsias2020direct,bagal2021molgpt,jolicoeur2024any,dauparas2022robust,runcie2023silvr,xiong2025proteinguide,Tang2024.10.28.620755}, or by improving candidates from oracle feedback through methods such as latent-space Bayesian or evolutionary optimization, iterative retraining, preference optimization, and reinforcement learning~\citep{gomez2018automatic,griffiths2020constrained,castro2022transformer,brookes2018design,tripp2020sample,ye2026latentchem,cheng2024decomposed,widatalla2024aligning,li2026cagenmol}. We focus on the RL branch, which provides a general feedback-driven post-training formulation in which generated candidates are scored by objectives such as molecular properties, stability, or multi-objective reward functions, and the generator is updated to increase the likelihood of high-reward samples. RL-based biomolecular optimization has been instantiated across diverse representations, including SMILES sequence models such as REINVENT, ReLeaSE, and ChemRLformer~\citep{olivecrona2017molecular,popova2018deep,ghugare2023searching}, graph- or fragment-based molecular generators such as GCPN, MolDQN, RationaleRL, LibINVENT, and DrugEx v3~\citep{you2018graph,zhou2019optimization,jin2020multi,fialkova2021libinvent,liu2023drugex}, and protein sequence or structure-conditioned generators such as model-based RL for biological sequence design, RL-DIF, and ProteinZero~\citep{angermueller2019model,ektefaie2024reinforcement,wang2025proteinzero}. This broad applicability of RL motivates our focus on diversity-aware reward design at the post-training level.

\subsection{Diversity-Promoting RL for Biomolecular Generation}
\label{subsec:diversity_promoting_rl}

Diversity-promoting RL has been explored in both molecular and protein generation to mitigate mode collapse and sample redundancy. One line of work builds diversity objectives around the structure of a specific generator or design task, for example by jointly generating multiple SMILES strings in a single sequence~\citep{jang2024can}, exploiting augmented SMILES and score reuse~\citep{bjerrum2023faster}, incorporating diversity into fragment-based molecular construction~\citep{yang2021hit}, pairing exploitation and exploration policies during generation~\citep{liu2019exploration}, or adding task-specific regularization in protein inverse folding and sequence design~\citep{ektefaie2024rldif,wang2025proteinzero,park2024improving}. A more broadly applicable family instead promotes diversity through indirect reward shaping, such as diverse mini-batch selection~\citep{svensson2025diverse}, memory- or scaffold-based penalties and filters~\citep{blaschke2020memory,pereira2021diversity,loeffler2024reinvent,thomas2022augmented,gummesson2024utilizing,zhu2025scaffold}, distance-to-memory or novelty rewards~\citep{hu2024hamiltonian,svensson2024diversity,park2025mol,chadi2023curiosity}, entropy regularization~\citep{seqdiffql2025}, or count-based visitation bonuses~\citep{angermueller2020model}. These approaches have shown empirical benefits, but they are either tightly coupled to particular generator interfaces or optimize indirect proxies such as novelty, entropy, or history-relative exploration rather than the diversity of the current generated sample set itself. Our work focuses on this latter gap.

%% file: sec/setup.tex
\section{Problem Setup: Utility--Diversity Frontier in Biomolecular Generation}
\label{sec:problem_setup}

\subsection{Setup}
\label{subsec:preliminaries}

We consider a conditional biomolecular generator $\pi_\theta(x \mid \mathcal{C})$, where $x$ is a generated candidate and $\mathcal{C}$ is the conditioning input. Depending on the task, $\mathcal{C}$ may be empty, a task or property specification, or a target environment such as a protein binding pocket. The formulation is model-agnostic and applies to pretrained \emph{de novo} molecular generators, pocket-conditioned molecular generators, and protein language models.
Each candidate receives an individual utility score $r(x,\mathcal{C})$. The exact form of $r$ depends on the domain. For small molecules, it may combine drug-likeness and synthesizability, and in pocket-conditioned generation, it may additionally include target-specific terms such as docking. For proteins, utility may reflect sequence plausibility, stability, foldability, or developability.

We also care about diversity among the generated outputs. For a set of $K$ candidates generated under the same condition, denoted by $G = \{x_1,\dots,x_K\}$, let $D(G)$ be a set-level diversity score. This score may measure internal diversity, scaffold diversity, sequence diversity, or cluster coverage. The key point is that diversity is \emph{not} a per-sample reward: in general, it depends on the relationships among samples in the set and cannot be reduced to independently scoring each candidate. Optimizing diversity, therefore, requires reasoning over groups of outputs rather than isolated generations.

\subsection{Utility--diversity frontier}
\label{subsec:pareto_objective}

Let $p(\mathcal{C})$ denote the distribution over generation conditions. At inference time, the trained generator is paired with a decoding strategy $a \in \mathcal{A}$, such as a sampling temperature or related decoding hyperparameters. Together, $(\theta, a)$ determine two expected quantities: the expected individual utility, denoted by $U(\theta,a)$, and the expected set-level diversity, denoted by $V(\theta,a)$. Here $U(\theta,a)$ is computed from single generated samples, while $V(\theta,a)$ is computed from sets of $K$ samples drawn under the same condition.

Varying the decoding strategy induces a set of attainable utility--diversity trade-offs for the generator, which we denote by
$
\mathcal{P}(\theta)
=
\{(U(\theta,a), V(\theta,a)) : a \in \mathcal{A}\}.
$
A point on this set is Pareto-optimal if no other decoding strategy achieves both higher utility and higher diversity at the same time.

Our goal is to improve this frontier itself. Rather than optimizing only utility or only diversity, we seek post-training methods that push $\mathcal{P}(\theta)$ outward, so that the same generator can achieve better utility at a fixed diversity level, better diversity at a fixed utility level, or both.

%% file: sec/method.tex
\section{Supergroup Relative Policy Optimization}
\label{sec:sgrpo}

SGRPO is a post-training reinforcement learning method for improving the utility--diversity frontier of a pretrained biomolecular generator. Its central idea is simple: since diversity is a set-level property, training should compare \emph{sets} of candidates generated under the same condition, rather than scoring each candidate in isolation. For each condition $\mathcal{C}$, SGRPO samples several candidate groups, scores each group by diversity, redistributes the group-level signal back to individual candidates according to their within-group contribution, and then applies a PPO-style update using a same-condition relative advantage. Figure~\ref{fig:sgrpo_overview} illustrates the overall pipeline, and detailed pseudocode is provided in Appendix~\ref{app:full_algorithm}.

\begin{figure*}[!ht]
    \centering
    \includegraphics[width=\textwidth]{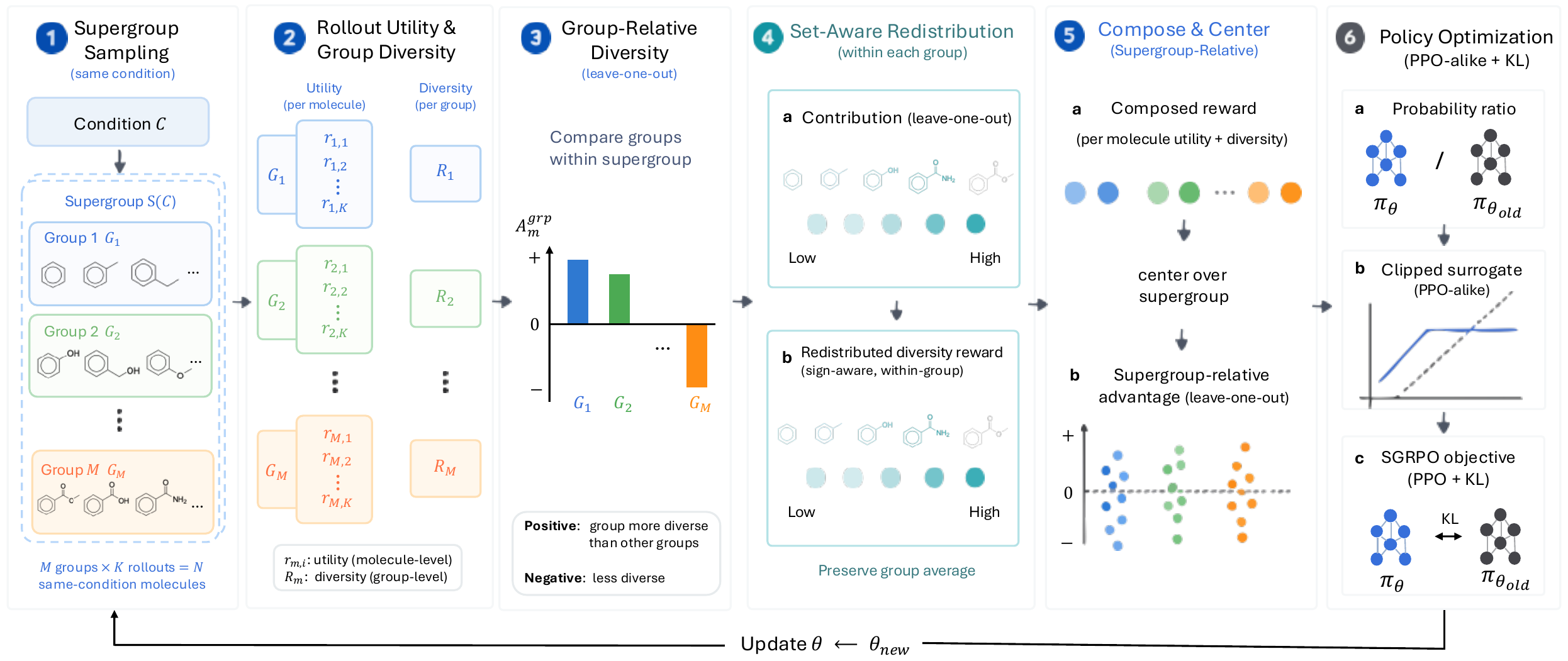}
    \vspace{-10pt}
    \caption{
        \small
        Overview of SGRPO. For each condition, SGRPO samples a same-condition supergroup, computes rollout-level utility and group-level diversity, compares groups by leave-one-out group-relative diversity, redistributes the diversity signal within each group according to leave-one-out set contributions, centers the composed rewards over the supergroup, and updates the policy with a PPO-style objective and KL regularization.
    }
    \vspace{-10pt}
    \label{fig:sgrpo_overview}
\end{figure*}

\subsection{Same-condition supergroups}
\label{subsec:supergroup_sampling}

For a condition $\mathcal{C}$, we sample $M$ groups from the old policy, each containing $K$ independently generated rollouts. Denote the resulting collection by $\mathcal{S}(\mathcal{C})=\{G_1,\dots,G_M\}$, where $G_m=\{x_{m,1},\dots,x_{m,K}\}$ and each $x_{m,i}\sim \pi_{\theta_{\rm old}}(\cdot\mid\mathcal{C})$. We refer to $\mathcal{S}(\mathcal{C})$ as a \emph{supergroup}. It contains $N=MK$ candidates generated under the same condition, with $M$ controlling how many alternative groups are compared and $K$ controlling the size of each group.

Restricting comparisons to a single supergroup is important. In conditional biomolecular generation, different conditions can have very different intrinsic difficulty, so comparing samples across conditions would confound policy quality with condition difficulty. SGRPO instead performs only \emph{local} comparisons: under the same $\mathcal{C}$, which groups are more diverse, and which rollouts are more useful?

\subsection{Utility and group-level diversity}
\label{subsec:rollout_group_rewards}

Each rollout $x_{m,i}$ receives an individual utility reward $r_{m,i}=r(x_{m,i},\mathcal{C})$, while each group $G_m$ receives a diversity score $R_m=D(G_m)$. Thus, utility is defined at the candidate level, whereas diversity is defined over the whole group.

To compare groups generated under the same condition, we center group diversity within the supergroup. Let $\bar R=\frac{1}{M}\sum_{h=1}^{M}R_h$. We define the group-relative diversity signal as
\begin{equation}
A^{\mathrm{grp}}_m
=
R_m-\frac{1}{M-1}\sum_{h\neq m}R_h
=
\frac{M}{M-1}(R_m-\bar R).
\label{eq:group_advantage_centered}
\end{equation}
This is simply a leave-one-out comparison among same-condition groups: $A^{\mathrm{grp}}_m>0$ means that $G_m$ is more diverse than its alternatives, and $A^{\mathrm{grp}}_m<0$ means the opposite. For the normalized pairwise diversity used in our experiments, average diversity over groups of size $K$ is an unbiased proxy for the diversity of a larger same-condition sample, so optimizing group diversity is aligned with improving diversity at the supergroup level. Formal statements are given in Appendix~\ref{app:sgrpo_theory}.

\subsection{Set-aware redistribution}
\label{subsec:set_aware_redistribution}

The diversity score $R_m$ evaluates an entire group, but policy optimization ultimately acts on individual rollouts. SGRPO bridges this gap by assigning more of the diversity signal to candidates that matter more for the diversity of their own group.

For each rollout $x_{m,i}\in G_m$, we first compute its leave-one-out contribution
$c_{m,i}=D(G_m)-D(G_m\setminus\{x_{m,i}\})$,
and standardize these contributions within the group as
$z_{m,i}=(c_{m,i}-\bar c_m)/(\sigma(c_{m,\cdot})+\zeta)$.
We then form two sign-aware softmax weight vectors:
\begin{equation}
w^{\pm}_{m,i}
=
K\cdot
\frac{\exp(\pm z_{m,i}/\tau_c)}
{\sum_{j=1}^{K}\exp(\pm z_{m,j}/\tau_c)}.
\label{eq:redistribution_weights}
\end{equation}
By construction, $\sum_i w^+_{m,i}=\sum_i w^-_{m,i}=K$. Here $w^+_{m,i}$ emphasizes candidates with larger diversity contributions, while $w^-_{m,i}$ emphasizes candidates with smaller ones.

We then define the redistributed diversity reward
\begin{equation}
\widetilde R_{m,i}
=
R_m
+
[A^{\mathrm{grp}}_m]_+(w^+_{m,i}-1)
-
[-A^{\mathrm{grp}}_m]_+(w^-_{m,i}-1),
\label{eq:redistributed_group_reward}
\end{equation}
where $[a]_+=\max(a,0)$. The redistribution is sign-aware. If $A^{\mathrm{grp}}_m>0$, then $G_m$ is more diverse than its same-condition alternatives, and the extra positive signal is concentrated on candidates that contributed more to that diversity. If $A^{\mathrm{grp}}_m<0$, then $G_m$ is less diverse, and the negative signal is concentrated on candidates that contributed less. By construction, redistribution preserves the original group reward on average, i.e., $\frac{1}{K}\sum_{i=1}^{K}\widetilde R_{m,i}=R_m$.

\subsection{Supergroup-relative policy update}
\label{subsec:supergroup_advantage}

We combine candidate-level utility and redistributed diversity into a single reward,
$\widehat r_{m,i}=(1-\lambda)r_{m,i}+\lambda \widetilde R_{m,i}$,
where $\lambda\in[0,1]$ controls the utility--diversity trade-off. Let $\bar r_{\mathcal S}=\frac{1}{MK}\sum_{m=1}^{M}\sum_{i=1}^{K}\widehat r_{m,i}$ denote the average composed reward within the supergroup. The final supergroup-relative advantage is
\begin{equation}
A_{m,i}
=
\frac{MK}{MK-1}\bigl(\widehat r_{m,i}-\bar r_{\mathcal S}\bigr).
\label{eq:supergroup_advantage}
\end{equation}
Equivalently, this is a leave-one-out baseline over all rollouts in the same supergroup. Since all rollouts in the supergroup share the same condition, $A_{m,i}$ measures whether a rollout is better or worse than its local same-condition alternatives after utility and diversity have been combined.

We then update the policy with a clipped PPO objective and a KL penalty to a reference policy $\pi_{\rm ref}$. Let $\rho_{m,i}=\pi_\theta(x_{m,i}\mid\mathcal{C})/\pi_{\theta_{\rm old}}(x_{m,i}\mid\mathcal{C})$. The objective is
\begin{equation}
\mathcal{L}_{\mathrm{SGRPO}}(\theta)
=
-\mathbb{E}\!\left[
\min\!\Bigl(
\rho_{m,i}A_{m,i},
\operatorname{clip}(\rho_{m,i},1-\epsilon,1+\epsilon)A_{m,i}
\Bigr)
\right]
+
\beta\,\mathbb{E}\!\left[
\mathrm{KL}\!\Bigl(
\pi_\theta(\cdot\mid\mathcal{C})
\,\|\,
\pi_{\rm ref}(\cdot\mid\mathcal{C})
\Bigr)
\right]
\label{eq:sgrpo_objective}
\end{equation}
The expectation is over conditions, sampled supergroups, and rollouts. In practice, SGRPO alternates between sampling same-condition supergroups, computing group-level diversity and rollout-level redistributed rewards, and updating the policy with the objective above. Full pseudocode is provided in Appendix~\ref{app:full_algorithm}.

%% file: sec/exp.tex
\section{Experiments}
\newcommand{\metricci}[2]{\( #1{\scriptstyle \pm #2}\)}
\newcommand{\bestmetricci}[2]{\(\mathbf{#1{\scriptstyle \pm #2}}\)}

We evaluate whether SGRPO expands the utility-diversity Pareto frontier across three biomolecular generation settings: unconditional \emph{de novo} small-molecule design, pocket-based small-molecule design, and \emph{de novo} protein design. In each setting, we decode each model under a sweep of operating points and summarize every operating point by its utility and set-level diversity. We compare the resulting frontiers against the pretrained generator, GRPO, and memory-assisted GRPO when applicable, using the same Pareto-level metrics across tasks. This evaluation tests whether supergroup-relative diversity pressure improves the trade-off frontier itself, rather than merely shifting generation toward higher utility or higher randomness.

\subsection{Evaluation Protocol}

Each experiment evaluates a generator under a range of task-specific decoding settings, treating each setting as one utility--diversity operating point. For a given model, this yields a set of points
$A=\{(U_i,V_i)\}_{i=1}^{n}$,
where $U_i$ and $V_i$ denote the utility and diversity of the $i$-th setting. 
Both metrics are scaled to $[0,1]$, with higher values indicating better performance.
We summarize performance by the non-dominated subset of $A$, denoted by $ND(A)$. A point belongs to $ND(A)$ if no other decoding setting achieves at least as much utility and at least as much diversity, with one of them being strictly better. In other words, $ND(A)$ is the Pareto frontier of the model under the evaluated decoding settings. \footnote{Across all evaluated methods and decoding settings, output validity was 100\% in our experiments, so the reported utility and diversity values are not confounded by differences in validity.}

\paragraph{Hypervolume.}
For each experiment, we use a common reference point
$r_{\mathrm{exp}}=(r_U,r_V)$, where $r_U$ and $r_V$ are the minimum utility and diversity observed across all operating points from all compared methods. In two dimensions, the hypervolume of a model is the area of the staircase-shaped region dominated by its non-dominated operating points and bounded below by $r_{\mathrm{exp}}$. Equivalently,
$
HV(A;r_{\mathrm{exp}})
=
\mathrm{Area}\!\left(
\cup_{(U,V)\in ND(A)}
[r_U,U]\times[r_V,V]
\right).
$
Thus, HV is the union area of axis-aligned rectangles induced by the non-dominated set, rather than the area of a single rectangle. Larger HV indicates that the frontier extends further toward high utility and high diversity and/or spans a broader utility--diversity range. Because the reference point is experiment-specific, HV is intended for within-experiment comparison rather than direct comparison across tasks.

\paragraph{Distance to Ideal Point.}
Let \(z^\star=(U^\star, V^\star)\) denote the ideal point, whose coordinates are the best attainable values of the two objectives. For an operating-point set \(A\), we report \(\mathrm{DIP}(A,z^\star)=\min_{(U,V)\in A}\sqrt{(U^\star-U)^2+(V^\star-V)^2}\). Since utility and diversity are scaled to \([0,1]\), we set \(z^\star=(1,1)\). Lower distance is better.

\paragraph{R2 Indicator.}
R2 evaluates an operating-point set under multiple utility-diversity preference weights. For a weight \(\lambda=(\lambda_U,\lambda_V)\), we define the best weighted Tchebycheff shortfall as \(g(A\mid\lambda,z^\star)=\min_{(U,V)\in A}\max\{\lambda_U(z^\star_U-U),\lambda_V(z^\star_V-V)\}\), and compute \(R2(A,\Lambda,z^\star)=\frac{1}{|\Lambda|}\sum_{\lambda\in\Lambda}g(A\mid\lambda,z^\star)\). In our implementation, \(A\) is the full model-specific sweep set and \(\Lambda=\{\lambda^{(\ell)}=(\ell/100,1-\ell/100)\}_{\ell=0}^{100}\). Lower R2 means a smaller average weighted worst-case shortfall to \(z^\star\).

\begin{table*}[!ht]
    \caption{
        Frontier-level metrics for the three tasks. Each cell reports mean \(\pm\) 95\% confidence interval over five independent sweep runs. HV is higher-is-better, while distance to ideal point (DIP) and R2 are lower-is-better. For the two small-molecule tasks, GRPO, Mem-GRPO, and SGRPO denote coupled-GRPO, Memory-assisted coupled-GRPO, and coupled-SGRPO, respectively.
    }
    \vspace{-7pt}
    \label{tab:frontier_metrics}
    \centering
    \small
    \setlength{\tabcolsep}{1.2pt}
    \resizebox{\textwidth}{!}{%
    \begin{tabular}{lccccccccccc}
        \toprule
        & \multicolumn{4}{c}{\emph{De novo} Small-Molecule Design}
        & \multicolumn{3}{c}{Pocket-Based Design}
        & \multicolumn{4}{c}{\emph{De novo} Protein Design} \\
        \cmidrule(lr){2-5}
        \cmidrule(lr){6-8}
        \cmidrule(lr){9-12}
        Metric
        & Original & GRPO & Mem-GRPO & SGRPO
        & Original & GRPO & SGRPO
        & Original & GRPO & Mem-GRPO & SGRPO \\
        \midrule
        HV \(\uparrow\)
        & \metricci{0.0579}{0.0026} & \metricci{0.0629}{0.0032} & \metricci{0.0585}{0.0024} & \bestmetricci{0.0672}{0.0036}
        & \metricci{0.0293}{0.0011} & \metricci{0.0090}{0.0000} & \bestmetricci{0.0654}{0.0002}
        & \metricci{0.2708}{0.0074} & \metricci{0.2078}{0.0052} & \metricci{0.0245}{0.0008} & \bestmetricci{0.3627}{0.0085} \\
        DIP \(\downarrow\)
        & \metricci{0.2719}{0.0015} & \metricci{0.2679}{0.0017} & \metricci{0.2696}{0.0025} & \bestmetricci{0.2551}{0.0020}
        & \metricci{0.4643}{0.0015} & \metricci{0.7527}{0.0001} & \bestmetricci{0.3818}{0.0003}
        & \metricci{0.4279}{0.0076} & \metricci{0.6519}{0.0058} & \metricci{1.0128}{0.0011} & \bestmetricci{0.3538}{0.0048} \\
        R2 \(\downarrow\)
        & \metricci{0.1035}{0.0003} & \metricci{0.1034}{0.0004} & \metricci{0.1072}{0.0003} & \bestmetricci{0.0979}{0.0005}
        & \metricci{0.2382}{0.0011} & \metricci{0.3874}{0.0000} & \bestmetricci{0.1809}{0.0002}
        & \metricci{0.2201}{0.0027} & \metricci{0.3023}{0.0034} & \metricci{0.4840}{0.0024} & \bestmetricci{0.1693}{0.0036} \\
        \bottomrule
    \end{tabular}}
\end{table*}

\begin{figure*}[!ht]
    \centering
    \includegraphics[width=\textwidth]{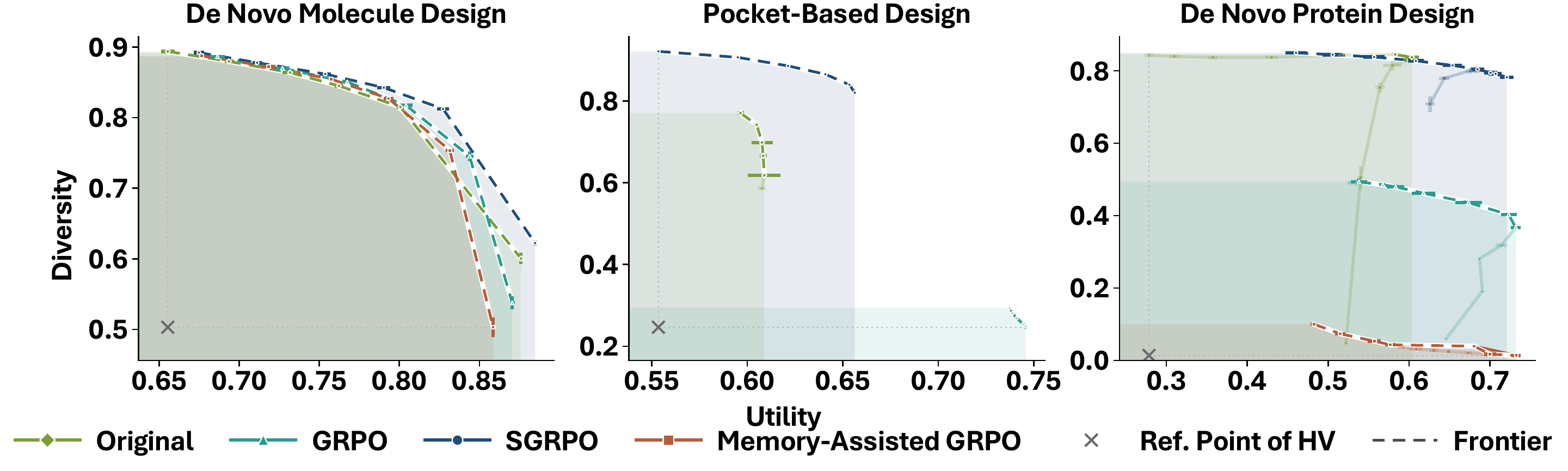}
    \vspace{-15pt}
    \caption{\small
        Utility--diversity operating points for de novo small-molecule design, pocket-based small-molecule design, and de novo protein design. 
        Each marker corresponds to one decoding setting and reports the mean utility and diversity over five independent runs; error bars show 95\% confidence intervals on both axes. Dashed lines trace the method-specific non-dominated subsets.
    }
    \label{fig:main_pareto}
\end{figure*}

\subsection{\emph{De novo} Small-Molecule Design}
\label{subsec:denovo_small_molecule}

\paragraph{Setup.}
We study unconditional \emph{de novo} small-molecule design with GenMol~\citep{lee2025genmol}, a discrete diffusion language model that generates molecules as SAFE strings~\citep{noutahi2024gotta}. Unlike autoregressive LMs, GenMol generates samples through iterative denoising, so applying GRPO requires a diffusion-compatible training objective. We therefore instantiate SGRPO on top of coupled-GRPO~\citep{gong2025diffucoder}, which adapts GRPO-style relative policy optimization to discrete diffusion models via coupled denoising samples.
The molecule-level utility in this experiment is defined by drug-likeness and synthetic accessibility. 
We use QED~\citep{bickerton2012quantifying} as a normalized drug-likeness score in $[0,1]$, and denote the raw synthetic accessibility score~\citep{ertl2009estimation} by $\mathrm{SA}(x)$, where lower values indicate easier synthesis. We convert it into a high-is-better score $s_{\mathrm{SA}}(x)=\mathrm{clip}((6-\mathrm{SA}(x))/5,0,1)$, and define the rollout utility as
$
u(x)=\alpha_{\mathrm{QED}}\mathrm{QED}(x)+\alpha_{\mathrm{SA}}s_{\mathrm{SA}}(x).
$
Unless otherwise noted, we use $\alpha_{\mathrm{QED}}=0.6$ and $\alpha_{\mathrm{SA}}=0.4$, slightly prioritizing drug-likeness while retaining synthetic accessibility as a feasibility-oriented component. Because our main comparisons are based on frontier-level metrics across decoding settings, rather than a single operating point, the conclusions do not hinge on a finely tuned choice of these scalarization weights.
Sample diversity is measured by internal diversity over valid generated molecules using Morgan-fingerprint Tanimoto distances~\citep{bajusz2015tanimoto}. Specifically, \(V(S)=1-\frac{2}{|S|(|S|-1)}\sum_{i<j}s_{ij}\), where \(s_{ij}=\mathrm{Tan}(\phi(x_i),\phi(x_j))\), \(\phi(x)\) is the Morgan fingerprint of molecule \(x\), and \(\mathrm{Tan}\) denotes Tanimoto similarity. 
We compare SGRPO (denoted as coupled-SGRPO) against the pretrained GenMol model, coupled-GRPO, and Memory‑assisted RL-based coupled-GRPO~\citep{blaschke2020memory}. For Pareto evaluation, each model is decoded under the same sweep of GenMol randomness \(\rho\) and temperature \(\tau\), using the six settings \((0.1,0.5)\), \((0.3,0.8)\), \((0.5,1.1)\), \((0.7,1.4)\), \((0.9,1.7)\), and \((1.0,2.0)\). At each sweep point, we generate 1000 molecules per model and compute both utility metrics and internal diversity over the valid molecules generated at that point.


\paragraph{Result.}
SGRPO expands the utility--diversity frontier for \emph{de novo} small-molecule design by improving the high-utility end of the trade-off. In Figure~\ref{fig:main_pareto}, all methods are similar under conservative decoding, but the baselines lose diversity more rapidly as decoding is pushed toward higher utility. Coupled-SGRPO shows a noticeably slower diversity drop, yielding a frontier that extends further right without an equally severe downward bend. This indicates that SGRPO mainly delays diversity collapse, rather than uniformly improving all operating points.
Table~\ref{tab:frontier_metrics} confirms the same trend quantitatively. Coupled-SGRPO achieves the best HV (0.0670) as well as the lowest DIP (0.2542) and R2 (0.0977), indicating a frontier that is both closer to the ideal point and more favorable overall. The gains are moderate in absolute size because the four methods already overlap substantially in the low- and mid-utility regime, but the ranking is consistent across all three frontier metrics.

\subsection{Pocket-Based Small-Molecule Design}
\paragraph{Setup.}
We train GenMol-P for pocket-based small-molecule design. GenMol-P initializes from the pretrained GenMol and adds pocket-prefix conditioning: a frozen ESM-IF1~\citep{hsu2022learning} pocket encoder embeds pocket, and a two-layer MLP projector maps these embeddings into the GenMol hidden space before molecular denoising. We supervised-tune GenMol-P on the CrossDocked2020~\citep{francoeur2020three} training set.
Following Section~\ref{subsec:denovo_small_molecule}, we instantiate SGRPO as coupled-SGRPO for this discrete diffusion generator. The rollout utility augments the QED--SA utility with a target-dependent docking term:
$
u(x,\mathcal C)=\alpha_{\mathrm{QED}}\mathrm{QED}(x)+\alpha_{\mathrm{SA}}s_{\mathrm{SA}}(x)+\alpha_{\mathrm{dock}}s_{\mathrm{dock}}(x,\mathcal C),
$
where
$
s_{\mathrm{dock}}(x,\mathcal C)=\mathrm{clip}(-a_{\mathrm{Vina}}(x,\mathcal C)/10,0,1)
$
maps the raw AutoDock Vina~\citep{trott2010autodock} score $a_{\mathrm{Vina}}(x,\mathcal C)$ to a high-is-better score in $[0,1]$. Unless otherwise noted, we set $\alpha_{\mathrm{QED}}=0.3$, $\alpha_{\mathrm{SA}}=0.2$, and $\alpha_{\mathrm{dock}}=0.5$, giving docking the largest weight because pocket compatibility is the primary task objective, while QED and SA act as chemistry-oriented regularizers. 
We do not retune these coefficients per method, because evaluation is based on frontier-level metrics over a shared decoding sweep, and the results are not sensitive to the exact scalarization.
We compare coupled-SGRPO against the original GenMol-P model and coupled-GRPO under the same six paired $(\rho,\tau)$ settings used in Section~\ref{subsec:denovo_small_molecule}, which span conservative to exploratory decoding regimes, using pockets from the CrossDocked2020 test set. 
\footnote{We exclude Memory-assisted coupled-GRPO here: global memory mixes unrelated pockets, while pocket-specific memories require separate optimization and extra compute.}
At each sweep point, each model generates 16 ligands for each of 100 held-out pockets, for 1600 ligand samples in total. This protocol provides stable estimates at a manageable docking cost. Utility metrics are averaged over the valid generated ligands at that point, while diversity is computed within each pocket's 16 ligands and then averaged over pockets; this measures target-conditional diversity rather than conflating diversity with variation across different pockets.

\paragraph{Result.}
Pocket-based design is the setting where SGRPO helps most: because optimizing docking to a fixed pocket tends to collapse generation onto a few high-scoring chemotypes, coupled-GRPO improves utility only by sacrificing within-pocket diversity, whereas coupled-SGRPO shifts the GenMol-P operating-point trajectory outward relative to both the original model and coupled-GRPO and retains markedly higher diversity at comparable utility, especially in the high-utility regime where collapse is strongest (Figure~\ref{fig:main_pareto}). This matches the conditional nature of the task: multiple chemically distinct ligands can be similarly plausible for the same pocket, but utility-only relative updates over-amplify small score differences and reinforce redundancy, while SGRPO explicitly rewards high-utility samples that also contribute marginal diversity. Accordingly, coupled-SGRPO achieves the best frontier-level performance in Table~\ref{tab:frontier_metrics}, with the largest HV and the smallest DIP and R2, and its advantage is more pronounced than in \emph{de novo} molecule generation. Figure~\ref{fig:main_pareto} shows all evaluated decoding settings for completeness, while HV, DIP, and R2 are computed only on the non-dominated subset, so some adjacent baseline points may improve in both utility and diversity before the true trade-off boundary is reached.

\subsection{\emph{De novo} Protein Design}
\label{subsec:denovo_protein}

\paragraph{Setup.}
We evaluate unconditional \emph{de novo} protein design with ProGen2~\citep{nijkamp2023progen2}, an autoregressive amino-acid language model, and apply SGRPO via GRPO~\citep{shao2024deepseekmath}. The sequence-level utility combines naturalness, foldability, stability, and developability:
\(u(y)=\alpha_{\mathrm{nat}}r_{\mathrm{nat}}(y)+\alpha_{\mathrm{fold}}r_{\mathrm{fold}}(y)+\alpha_{\mathrm{stab}}r_{\mathrm{stab}}(y)+\alpha_{\mathrm{dev}}r_{\mathrm{dev}}(y)\).
Here, the four terms are computed from ESM2, ESMFold~\citep{lin2023evolutionary}, TemBERTure~\citep{rodella2024temberture}, and ProteinSol-based scorers~\citep{hebditch2017protein}, respectively, with weights \(\alpha_{\mathrm{nat}}=0.25\), \(\alpha_{\mathrm{fold}}=0.30\), \(\alpha_{\mathrm{stab}}=0.20\), and \(\alpha_{\mathrm{dev}}=0.25\). Diversity is measured at the set level using normalized Levenshtein similarity over valid sequences:
\(V(S)=1-\frac{2}{|S|(|S|-1)}\sum_{i<j}s^{\mathrm{edit}}_{ij}\).
We compare SGRPO against the original ProGen2 model, GRPO, and Memory-assisted RL-based GRPO under the same temperature sweep \(\tau\in\{0.1,0.2,\ldots,1.0,1.1,1.2\}\). At each temperature, we sample 512 sequences per model and evaluate both utility and diversity over valid outputs.

\paragraph{Result.}
SGRPO achieves the best utility--diversity trade-off in \emph{de novo} protein design. In Figure~\ref{fig:main_pareto}, all post-training methods improve utility over the original ProGen2 model, but GRPO and Memory-assisted GRPO do so by collapsing diversity, with the latter showing the most severe mode concentration. By contrast, SGRPO reaches a similarly high-utility regime while preserving diversity much closer to the pretrained model, indicating that it improves sequence quality without sacrificing coverage of distinct sequence families. This pattern is reflected consistently in Table~\ref{tab:frontier_metrics}, where SGRPO attains the best HV, DIP, and R2.

%% file: sec/analysis.tex
\section{Analysis}

\subsection{Ablation Study}
\label{subsec:ablations}

\begin{wrapfigure}{r}{0.37\columnwidth}
    \vspace{-3.5em}
    \centering
    \includegraphics[width=\linewidth]{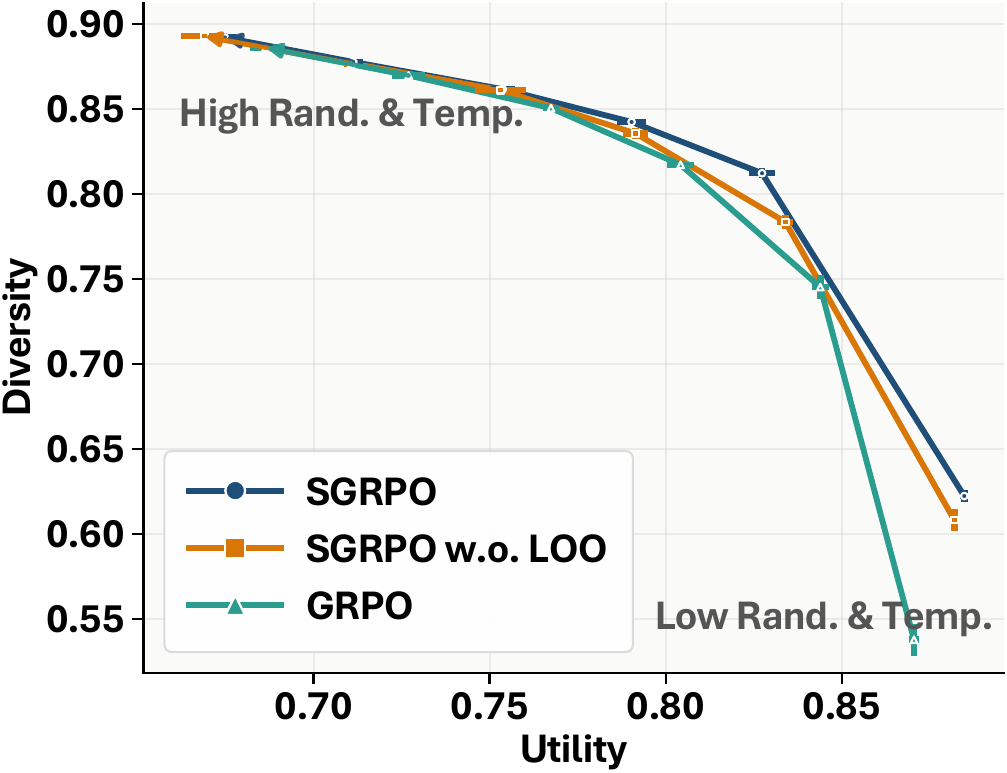}
    \vspace{-2em}
    \caption{
        \small{
        Ablation study on GenMol-based \emph{de novo} small-molecule generation. Each point reports the mean over five independent sweeps, and error bars indicate 95\% confidence intervals for utility and diversity. Removing the diversity term yields coupled-GRPO, while removing leave-one-out group credit weakens set-aware credit assignment.}
    }
    \label{fig:ablation}
    \vspace{-3.5em}
\end{wrapfigure}

We isolate two components of SGRPO on the GenMol-based \emph{de novo} small-molecule design task: the supergroup diversity reward and the leave-one-out diversity contribution mechanism. As shown in Figure~\ref{fig:ablation}, removing the diversity reward gives the innermost utility-diversity curve. Adding the supergroup diversity reward without leave-one-out credit assignment moves the curve outward, indicating that group-level diversity pressure itself improves the trade-off. Full SGRPO further dominates this variant, showing that leave-one-out credit assignment is important for redistributing the group diversity reward to the rollouts that actually contribute to set-level diversity.

\subsection{Training Dynamics of Generated Distributions}

To understand why the three training algorithms lead to different final utility-diversity trade-offs, we use \emph{de novo} protein design as a diagnostic setting and visualize how their generated sequence distributions move during training. For GRPO, Memory-assisted GRPO, and SGRPO, we sample 512 sequences from the shared original ProGen2 model, the checkpoint after 20 optimization steps, and the final checkpoint after 100 steps. We then pool all sequences and compute a shared two-dimensional UMAP embedding using distances derived from normalized Levenshtein similarity, so that movements are comparable across methods and checkpoints.

Figure~\ref{fig:progen2_distribution_dynamics} suggests two distinct training dynamics. After 20 steps, GRPO and Memory-assisted GRPO each move into a relatively concentrated region of the embedding, whereas SGRPO spreads across multiple clusters, including the regions explored by the other two methods. This early behavior indicates that supergroup-relative diversity pressure promotes broader exploration. From 20 to 100 steps, GRPO contracts into an even smaller region. Memory-assisted GRPO instead drifts toward a distant, narrow region: its memory penalty discourages revisiting high-density regions, which is not equivalent to directly optimizing set diversity and can drive distributional drift. SGRPO refines within the explored regions while retaining multiple clusters, explaining why it reaches the high-utility regime while preserving substantially more sequence diversity.

\begin{wrapfigure}{r}{0.45\columnwidth}
    \vspace{-2.5em}
    \centering
    \includegraphics[width=\linewidth]{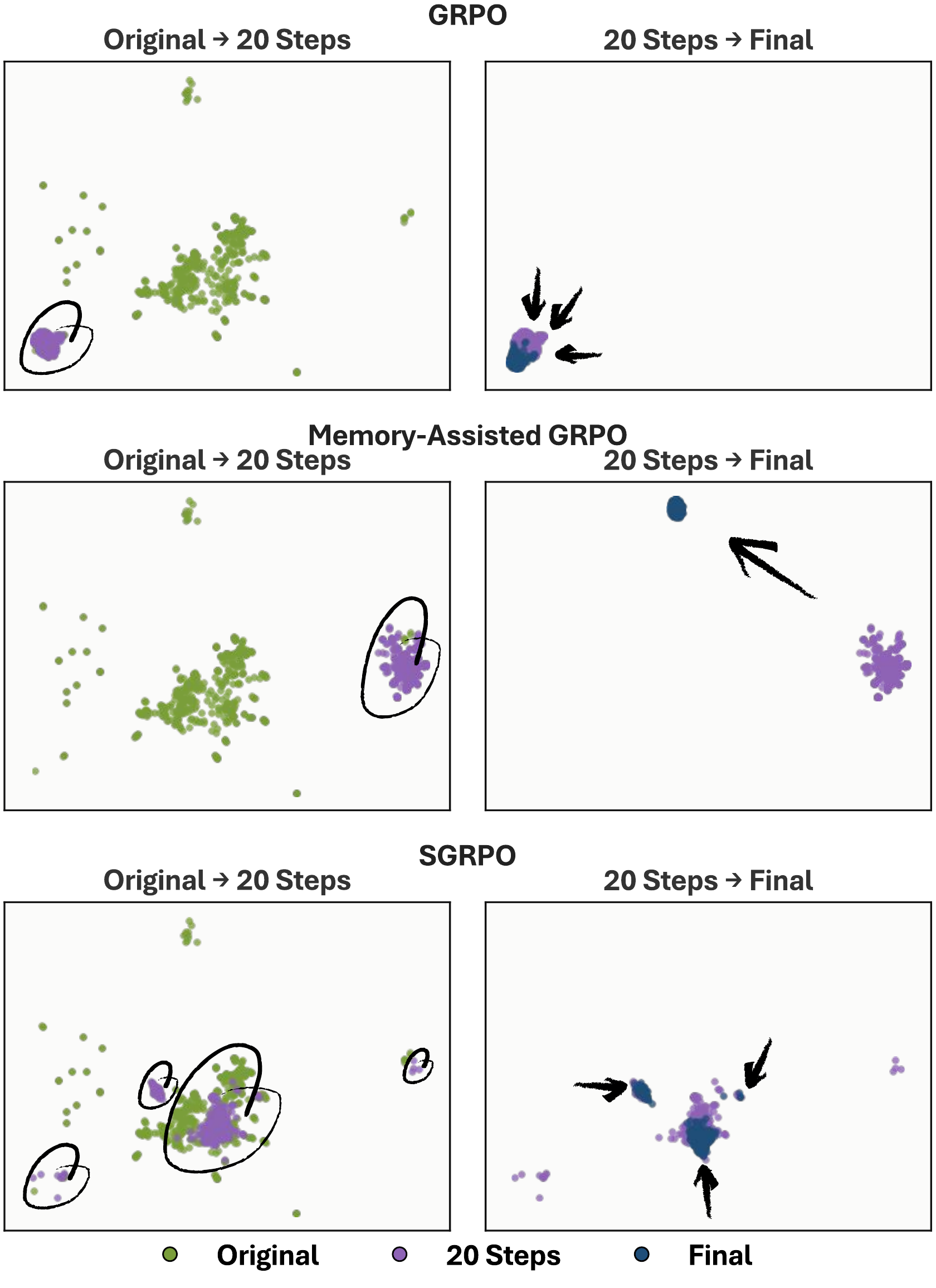}
    \vspace{-1.5em}
    \caption{
        \small{
        Distribution dynamics during ProGen2 post-training. SGRPO explores multiple clusters early and preserves them, whereas GRPO contracts and Memory-assisted GRPO drift toward a narrow distant region.}
    }
    \label{fig:progen2_distribution_dynamics}
    \vspace{-1.2em}
\end{wrapfigure}

\subsection{Robustness to Diversity-Estimator Efficiency and Reward Weighting}
\begin{figure}[!ht]
    \centering
    \small
    \includegraphics[width=0.75\textwidth]{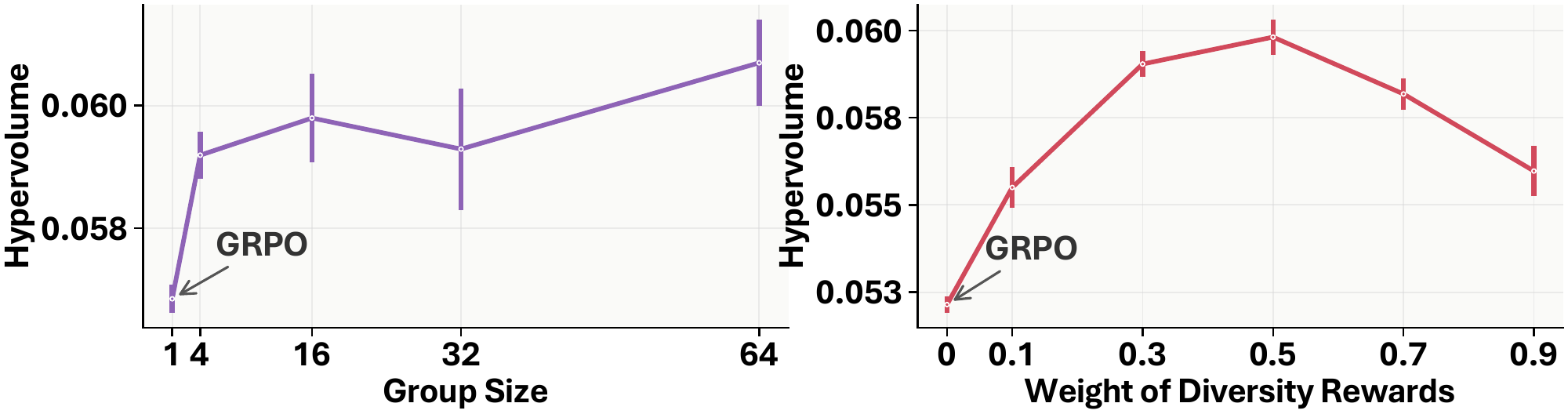}
    \vspace{-7pt}
    \caption{
        Analysis of diversity-estimator efficiency and group-reward weighting on the GenMol-based \emph{de novo} small-molecule design task. Each point reports the mean HV over five independent sweeps, and error bars indicate 95\% confidence intervals. Left: under a fixed total supergroup size, SGRPO already outperforms GRPO with small groups, indicating that useful group-diversity estimates do not require a large rollout multiplier. Right: SGRPO robustly improves HV over GRPO across all tested nonzero \(\lambda\), with the strongest frontier expansion at \(\lambda=0.5\).
    }
    \label{fig:hyperparam}
\end{figure}
We study two practical sensitivities of SGRPO on the GenMol-based \emph{de novo} small-molecule design task: the efficiency of the group-diversity estimator and the choice of group-reward weight \(\lambda\). In both cases, we train one model per setting, evaluate each model using the same decoding sweep as in Section~\ref{subsec:denovo_small_molecule}, and measure the HV of the resulting utility-diversity frontier using a common lower reference point computed from all operating points in this experiment. To test tolerance to diversity-estimator inefficiency, we hold the total supergroup size fixed while varying the group size \(K\in\{1,4,16,32,64\}\), where \(K=1\) recovers GRPO. As shown in the left panel of Figure~\ref{fig:hyperparam}, SGRPO already outperforms GRPO in the near-minimal setting \(K=4\), indicating that useful group-diversity estimates do not require a large multiplicative increase in rollouts. Larger groups generally lead to better frontier-level performance, suggesting that more efficient diversity estimates help but are not necessary for SGRPO to be effective.

We next vary the group-reward coefficient \(\lambda\in\{0,0.1,0.3,0.5,0.7,0.9\}\), which controls the trade-off between rollout-level utility and redistributed group-level diversity in the composed reward. The right panel of Figure~\ref{fig:hyperparam} shows that every tested nonzero value of \(\lambda\) improves HV over the GRPO baseline \(\lambda=0\), demonstrating that the benefit of supergroup-relative diversity pressure is robust to reward weighting. Performance peaks at \(\lambda=0.5\), suggesting that the strongest frontier expansion is achieved by balancing molecule-level utility with set-level diversity. Overall, Figure~\ref{fig:hyperparam} shows that SGRPO is robust to both imperfect diversity estimation and moderate variation in reward composition.

%% file: sec/conclusion.tex
\section{Conclusion}

We presented SGRPO, a GRPO-style framework for directly combining rollout-level utility with set-level sample diversity in biomolecular post-training. The central idea is simple: sample multiple candidate sets under the same condition, score their diversity, compare them within the supergroup, and redistribute the resulting set reward to individual rollouts through leave-one-out diversity contributions. Across \emph{de novo} small-molecule design, pocket-based small-molecule design, and \emph{de novo} protein design, SGRPO improves the utility-diversity Pareto frontier over pretrained generators and RL baselines. These results suggest that directly rewarding diverse generated sets is a practical way to expand the operating points available to biomolecular generation models.

%% file: sec/appendix.tex
\section{Full Training Procedure of SGRPO}
\label{app:full_algorithm}

This appendix provides the full training procedure of Supergroup Relative Policy Optimization (SGRPO), corresponding to Section~\ref{sec:sgrpo} in the main text. For each condition, SGRPO samples a same-condition supergroup, computes rollout-level utility and group-level diversity, redistributes the diversity signal to individual rollouts in a set-aware manner, forms supergroup-relative advantages, and updates the policy with a PPO-style objective regularized toward a reference policy.

\begin{figure*}[!ht]
\centering
\footnotesize
\begin{minipage}{0.985\textwidth}
\hrule
\vspace{0.35em}
\algcaption{Supergroup Relative Policy Optimization (SGRPO)}\label{alg:sgrpo}
\vspace{0.2em}
\hrule
\vspace{0.35em}

\begin{algorithmic}[1]
\STATE \textbf{Input:} initial policy $\pi_\theta$, reference policy $\pi_{\mathrm{ref}}$, condition batch $\{\mathcal C_b\}_{b=1}^B$, groups per condition $M$, rollouts per group $K$, diversity weight $\lambda$, PPO clip parameter $\epsilon$, KL coefficient $\beta$, contribution temperature $\tau_c$
\FOR{$t=1$ to $T$}
    \STATE Set $\theta_{\mathrm{old}} \leftarrow \theta$
    \FOR{$b=1$ to $B$}
        \STATE Sample a same-condition supergroup $\mathcal S_b=\{G_{b,1},\dots,G_{b,M}\}$, where each group is $G_{b,m}=\{x_{b,m,1},\dots,x_{b,m,K}\}$ with $x_{b,m,i}\sim \pi_{\theta_{\mathrm{old}}}(\cdot\mid\mathcal C_b)$
        \STATE Compute rollout utilities $r_{b,m,i}=r(x_{b,m,i},\mathcal C_b)$ for all $m,i$
        \STATE Compute group diversities $R_{b,m}=D(G_{b,m})$ for all $m$
        \STATE Compute the supergroup mean diversity $\bar R_b=\frac{1}{M}\sum_{h=1}^{M}R_{b,h}$
        \STATE Compute group-relative diversity signals
        $
        A^{\mathrm{grp}}_{b,m}
        =
        \frac{M}{M-1}(R_{b,m}-\bar R_b)
        $
        for all $m$
        \FOR{$m=1$ to $M$}
            \STATE For each rollout $x_{b,m,i}$, compute leave-one-out contribution
            $
            c_{b,m,i}
            =
            D(G_{b,m})-D(G_{b,m}\setminus\{x_{b,m,i}\})
            $
            \STATE Compute within-group standardized contributions
            $
            z_{b,m,i}
            =
            \dfrac{c_{b,m,i}-\bar c_{b,m}}{\sigma(c_{b,m,\cdot})+\zeta}
            $
            for all $i$
            \STATE Form sign-aware redistribution weights
            $
            w^{\pm}_{b,m,i}
            =
            K\,
            \dfrac{\exp(\pm z_{b,m,i}/\tau_c)}
            {\sum_{j=1}^{K}\exp(\pm z_{b,m,j}/\tau_c)}
            $
            for all $i$
            \STATE Assign redistributed diversity rewards
            $
            \widetilde R_{b,m,i}
            =
            R_{b,m}
            +
            [A^{\mathrm{grp}}_{b,m}]_+(w^+_{b,m,i}-1)
            -
            [-A^{\mathrm{grp}}_{b,m}]_+(w^-_{b,m,i}-1)
            $
            for all $i$
        \ENDFOR
        \STATE Form combined rollout rewards
        $
        \widehat r_{b,m,i}
        =
        (1-\lambda)r_{b,m,i}
        +
        \lambda \widetilde R_{b,m,i}
        $
        for all $m,i$
        \STATE Compute the supergroup mean reward
        $
        \bar r_{\mathcal S_b}
        =
        \frac{1}{MK}\sum_{m=1}^{M}\sum_{i=1}^{K}\widehat r_{b,m,i}
        $
        \STATE Compute supergroup-relative advantages
        $
        A_{b,m,i}
        =
        \frac{MK}{MK-1}
        \bigl(\widehat r_{b,m,i}-\bar r_{\mathcal S_b}\bigr)
        $
        for all $m,i$
    \ENDFOR
    \STATE Update $\theta$ by minimizing the PPO-style objective with KL regularization to $\pi_{\mathrm{ref}}$ using all collected tuples $(\mathcal C_b,x_{b,m,i},A_{b,m,i})$
\ENDFOR
\STATE \textbf{Return:} trained policy $\pi_\theta$
\end{algorithmic}

\vspace{0.35em}
\hrule
\end{minipage}
\end{figure*}

\paragraph{Implementation notes.}
All relative comparisons in SGRPO are performed within the same-condition supergroup. In particular, group-relative diversity signals are centered only across the $M$ groups sampled for the same condition, and supergroup-relative advantages are centered only across the corresponding $MK$ rollouts. This avoids confounding policy quality with variation in condition difficulty. In practice, the diversity metric $D(\cdot)$ can be instantiated according to the domain, and the utility reward $r(x,\mathcal C)$ can be any task-specific scalar oracle.

\section{Properties of Small-Group Pairwise Diversity Rewards}
\label{app:sgrpo_theory}

In this appendix, we justify the use of small-group diversity as a training signal in SGRPO. We show two properties for the normalized pairwise diversity used in our experiments: (i) \emph{partition consistency}, namely that the average diversity of randomly partitioned groups is an unbiased proxy for the diversity of the full same-condition sample; and (ii) \emph{concentration}, namely that this proxy becomes more stable as the group size increases.

\paragraph{Setup.}
Let $\mathcal{X}=\{x_1,\dots,x_N\}$ be a same-condition sample of size $N=MK$, where $M$ is the number of groups and $K$ is the group size. A \emph{balanced partition} $\Pi$ divides $\mathcal{X}$ into $M$ disjoint groups of size $K$:
\[
\Pi=\{G_1,\dots,G_M\},
\qquad
|G_m|=K,
\qquad
\bigsqcup_{m=1}^M G_m=\mathcal{X}.
\]
For a set $A=\{a_1,\dots,a_L\}$, define the normalized pairwise diversity
\begin{equation}
D_L(A)
=
\frac{2}{L(L-1)}
\sum_{1\le i<j\le L}
d(a_i,a_j),
\qquad
d(x,x') := 1-s(x,x'),
\label{eq:app_normalized_pairwise_diversity}
\end{equation}
where $s(x,x')\in[0,1]$ is a biomolecular similarity function. Equivalently, $D_L(A)$ is the average pairwise dissimilarity under $d\in[0,1]$.

Given a balanced partition $\Pi=\{G_1,\dots,G_M\}$, define the average small-group diversity
\begin{equation}
\overline D_{M,K}(\mathcal{X},\Pi)
=
\frac{1}{M}
\sum_{m=1}^M D_K(G_m).
\label{eq:app_avg_small_group_diversity}
\end{equation}

\subsection{Partition Consistency}

\begin{proposition}[Partition consistency of normalized pairwise diversity]
\label{prop:partition_consistency}
Let $\mathcal{X}=\{x_1,\dots,x_N\}$ be any fixed set of size $N=MK$. If $\Pi$ is drawn uniformly from all balanced partitions of $\mathcal{X}$ into $M$ groups of size $K$, then
\begin{equation}
\mathbb{E}_{\Pi}\!\left[\overline D_{M,K}(\mathcal{X},\Pi)\right]
=
D_N(\mathcal{X}).
\label{eq:app_partition_consistency}
\end{equation}
\end{proposition}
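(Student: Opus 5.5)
The argument is a linearity-of-expectation calculation over unordered pairs. I would first rewrite the left-hand side as a single sum over pairs of $\mathcal{X}$. For a balanced partition $\Pi=\{G_1,\dots,G_M\}$, every group has the same size $K$, so every normalizing factor equals $2/(K(K-1))$. Hence
\[
\overline D_{M,K}(\mathcal{X},\Pi)=\frac{1}{M}\cdot\frac{2}{K(K-1)}\sum_{1\le i<j\le N} d(x_i,x_j)\,\mathbf{1}\{x_i,x_j \text{ lie in the same group of }\Pi\}.
\]
The identity holds because the pairs counted inside the groups are exactly the co-grouped pairs, each counted once. Taking $\mathbb{E}_\Pi$ and using linearity leaves one quantity to compute: $p:=\Pr_\Pi[x_i,x_j \text{ co-grouped}]$.

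The main step is showing that $p$ does not depend on the pair $(i,j)$ and computing its value. Independence from the pair follows from symmetry. A permutation of $\mathcal{X}$ maps balanced partitions bijectively to balanced partitions, so the uniform law on partitions is invariant under it. Any unordered pair can be sent to any other by such a permutation, so $p$ is constant.

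I would then compute $p$ by double counting. Every balanced partition has exactly $M\binom{K}{2}$ co-grouped pairs, so
\[
\sum_{i<j}\mathbf{1}\{\text{co-grouped}\}=M\binom{K}{2}
\]
holds deterministically. Taking expectations gives $\binom{N}{2}p=M\binom{K}{2}$, that is, $p=\frac{MK(K-1)}{N(N-1)}=\frac{K-1}{N-1}$. As a cross-check, fix $x_i$: its $K-1$ groupmates are a uniform subset of the remaining $N-1$ points, which gives the same value.

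Substituting,
\[
\mathbb{E}_\Pi[\overline D_{M,K}]=\frac{2}{MK(K-1)}\cdot\frac{MK(K-1)}{N(N-1)}\sum_{i<j}d(x_i,x_j)=\frac{2}{N(N-1)}\sum_{i<j}d(x_i,x_j)=D_N(\mathcal{X}),
\]
which is the claimed identity \eqref{eq:app_partition_consistency}.

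I would state the standing assumption $K\ge 2$, since otherwise $D_K$ is undefined; the case $K=1$ corresponds to plain GRPO, where no group diversity is computed. I would also note that the argument uses nothing about $d$ beyond its being a fixed function on pairs, so the range $d\in[0,1]$ is not needed here and only matters for the later concentration result. The only delicate point is the claim that $p$ is constant. The double-counting computation already makes this rigorous once the permutation-invariance symmetry is stated, so I expect no real obstacle.
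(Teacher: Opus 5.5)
Your proposal is correct and follows essentially the same route as the paper: you rewrite $\overline D_{M,K}$ as an indicator-weighted sum over all pairs, apply linearity of expectation, and use the co-grouping probability $(K-1)/(N-1)$ together with $N=MK$. The only difference is cosmetic: you derive that probability by double counting ($\binom{N}{2}p=M\binom{K}{2}$), with the paper's ``fix $x_i$'' count as a cross-check, whereas the paper uses the latter argument directly; your explicit $K\ge 2$ assumption is a sensible addition.
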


\begin{proof}
For brevity, let $d_{ij}:=d(x_i,x_j)$. For a random balanced partition $\Pi$, define the indicator
\[
I_{ij}(\Pi)
=
\mathbf{1}\{x_i \text{ and } x_j \text{ belong to the same group under } \Pi\}.
\]
Then
\begin{align}
\overline D_{M,K}(\mathcal{X},\Pi)
&=
\frac{1}{M}
\sum_{m=1}^M
\frac{2}{K(K-1)}
\sum_{\substack{i<j\\ x_i,x_j\in G_m}}
d_{ij}
\nonumber\\
&=
\frac{1}{M}\cdot \frac{2}{K(K-1)}
\sum_{1\le i<j\le N}
I_{ij}(\Pi)\, d_{ij}.
\label{eq:app_partition_indicator_form}
\end{align}
Taking expectation over $\Pi$ gives
\begin{equation}
\mathbb{E}_{\Pi}\!\left[\overline D_{M,K}(\mathcal{X},\Pi)\right]
=
\frac{1}{M}\cdot \frac{2}{K(K-1)}
\sum_{1\le i<j\le N}
\mathbb{E}_{\Pi}[I_{ij}(\Pi)]\, d_{ij}.
\label{eq:app_partition_expectation_expand}
\end{equation}

By symmetry of the uniform balanced partition, for any fixed pair $(i,j)$,
\begin{equation}
\mathbb{P}_{\Pi}\!\left(I_{ij}(\Pi)=1\right)
=
\frac{K-1}{N-1},
\label{eq:app_same_group_probability}
\end{equation}
because once $x_i$ is assigned to a group, exactly $K-1$ of the remaining $N-1$ positions lie in that same group. Substituting Eq.~\eqref{eq:app_same_group_probability} into Eq.~\eqref{eq:app_partition_expectation_expand},
\begin{align}
\mathbb{E}_{\Pi}\!\left[\overline D_{M,K}(\mathcal{X},\Pi)\right]
&=
\frac{1}{M}\cdot \frac{2}{K(K-1)}\cdot \frac{K-1}{N-1}
\sum_{1\le i<j\le N} d_{ij}
\nonumber\\
&=
\frac{2}{MK(N-1)}
\sum_{1\le i<j\le N} d_{ij}
\nonumber\\
&=
\frac{2}{N(N-1)}
\sum_{1\le i<j\le N} d_{ij}
=
D_N(\mathcal{X}),
\end{align}
where we used $N=MK$ in the last step.
\end{proof}

Proposition~\ref{prop:partition_consistency} shows that SGRPO does not optimize an unrelated small-set objective: for normalized pairwise diversity, the average diversity of random groups is exactly aligned, in expectation, with the diversity of the full same-condition sample.

\subsection{Concentration Around Full-Sample Diversity}

We now show that the average small-group diversity not only matches the full-sample diversity in expectation, but also concentrates around it when the groups are sufficiently large.

\begin{proposition}[Concentration of average small-group diversity]
\label{prop:small_group_concentration}
Fix a condition $c$, and let $X_1,\dots,X_N \overset{\mathrm{i.i.d.}}{\sim} \pi_\theta(\cdot \mid c)$ with $N=MK$. Let $\mathcal{X}=\{X_1,\dots,X_N\}$, and let $\Pi$ be a uniformly random balanced partition of $\mathcal{X}$ into $M$ groups of size $K$, independent of the samples. Define
\begin{equation}
\mu(c)
:=
\mathbb{E}_{X,X'\sim\pi_\theta(\cdot\mid c)}
\bigl[d(X,X')\bigr].
\label{eq:app_population_diversity}
\end{equation}
Then, for any $\varepsilon>0$,
\begin{equation}
\mathbb{P}\!\left(
\left|
\overline D_{M,K}(\mathcal{X},\Pi)-D_N(\mathcal{X})
\right|
\ge \varepsilon
\right)
\le
4\exp\!\left(
-\frac{1}{2}
M\left\lfloor \frac{K}{2}\right\rfloor \varepsilon^2
\right).
\label{eq:app_concentration_tail}
\end{equation}
Consequently, for any $\delta\in(0,1)$,
\begin{equation}
\left|
\overline D_{M,K}(\mathcal{X},\Pi)-D_N(\mathcal{X})
\right|
\le \varepsilon
\end{equation}
with probability at least $1-\delta$ whenever
\begin{equation}
M\left\lfloor \frac{K}{2}\right\rfloor
\ge
\frac{2\log(4/\delta)}{\varepsilon^2}.
\label{eq:app_group_size_condition}
\end{equation}
Equivalently, a sufficient lower bound on the group size is
\begin{equation}
K
\ge
2\left\lceil
\frac{2\log(4/\delta)}{M\varepsilon^2}
\right\rceil.
\label{eq:app_k_lower_bound}
\end{equation}
\end{proposition}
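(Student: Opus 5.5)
The plan is to compare both $\overline D_{M,K}(\mathcal{X},\Pi)$ and $D_N(\mathcal{X})$ to the population quantity $\mu(c)$ and then combine the two bounds with the triangle inequality:
\[
|\overline D_{M,K}-D_N|\le |\overline D_{M,K}-\mu(c)|+|D_N-\mu(c)|.
\]
Each term will be bounded at level $\varepsilon/2$ with failure probability $2\exp(-\tfrac12 M\lfloor K/2\rfloor\varepsilon^2)$. A union bound then gives the stated factor $4$.

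\textbf{Step 1 (full-sample term).} $D_N(\mathcal{X})$ is a U-statistic of order $2$ with kernel $d\in[0,1]$ on $N$ i.i.d.\ samples, and its mean is $\mu(c)$. I will use Hoeffding's classical representation: $D_N$ is an average, over all permutations $\sigma$ of $\{1,\dots,N\}$, of
\[
W_\sigma=\frac{1}{\lfloor N/2\rfloor}\sum_{\ell=1}^{\lfloor N/2\rfloor} d\bigl(X_{\sigma(2\ell-1)},X_{\sigma(2\ell)}\bigr),
\]
which is a mean of $\lfloor N/2\rfloor$ independent $[0,1]$ variables. By Jensen's inequality applied to the moment generating function, $\mathbb{E}e^{s(D_N-\mu)}\le \mathbb{E}e^{s(W_\sigma-\mu)}$, and Hoeffding's lemma bounds the right-hand side. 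This gives
\[
\mathbb{P}(|D_N-\mu|\ge t)\le 2e^{-2\lfloor N/2\rfloor t^2}.
\]

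\textbf{Step 2 (small-group term).} Because $\Pi$ is independent of the i.i.d.\ samples, the groups $G_1,\dots,G_M$ are, conditionally on $\Pi$, $M$ disjoint sets of $K$ i.i.d.\ draws. The quantities $D_K(G_m)$ are therefore independent U-statistics with mean $\mu(c)$. I will apply the same permutation representation within each group, using independent permutations $\sigma_1,\dots,\sigma_M$. This writes $\overline D_{M,K}$ as an average over $(\sigma_1,\dots,\sigma_M)$ of the mean of $M\lfloor K/2\rfloor$ independent $[0,1]$ pair-dissimilarities. The same Jensen-plus-Hoeffding argument gives
\[
\mathbb{P}(|\overline D_{M,K}-\mu|\ge t)\le 2e^{-2M\lfloor K/2\rfloor t^2}.
\]
Since this bound does not depend on $\Pi$, it also holds unconditionally. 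I expect this step to be the main technical point: I need to check that the MGF convexity argument goes through for the product of permutation averages. It does, because the exponential of an average of the $W$'s is bounded by the average of their exponentials, and each $W$ is a sum of independent terms.

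\textbf{Step 3 (combine).} Since $\lfloor MK/2\rfloor\ge M\lfloor K/2\rfloor$, the bound from Step 1 is dominated by the one from Step 2. Setting $t=\varepsilon/2$ in both gives $2e^{-2M\lfloor K/2\rfloor\varepsilon^2/4}=2e^{-\frac12 M\lfloor K/2\rfloor\varepsilon^2}$ for each term, and the union bound yields~\eqref{eq:app_concentration_tail}. Requiring the right-hand side to be at most $\delta$ and rearranging gives~\eqref{eq:app_group_size_condition}. Finally, if $K\ge 2\lceil 2\log(4/\delta)/(M\varepsilon^2)\rceil$, then
\[
\lfloor K/2\rfloor\ge \lceil 2\log(4/\delta)/(M\varepsilon^2)\rceil,
\]
which implies~\eqref{eq:app_group_size_condition} and hence~\eqref{eq:app_k_lower_bound}.
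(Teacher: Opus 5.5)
Your proposal is correct and follows the same route as the paper. Both arguments pass through $\mu(c)$ via the triangle inequality, apply Hoeffding's bound for bounded U-statistics to $D_N(\mathcal{X})$ and to the block average, set $t=\varepsilon/2$, take a union bound, and use $\lfloor N/2\rfloor\ge M\lfloor K/2\rfloor$ to absorb the full-sample term. The one place you are more careful is the block-average step: the paper gets the exponent $2M\lfloor K/2\rfloor t^2$ somewhat informally by ``averaging $M$ independent sub-Gaussian terms,'' whereas your pooled permutation representation (a mean of $M\lfloor K/2\rfloor$ independent $[0,1]$ terms), combined with Jensen at the level of the moment generating function, justifies that exponent rigorously.
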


\begin{proof}
Both $D_N(\mathcal{X})$ and $\overline D_{M,K}(\mathcal{X},\Pi)$ estimate the same population quantity $\mu(c)$ defined in Eq.~\eqref{eq:app_population_diversity}.

First, $D_N(\mathcal{X})$ is a bounded U-statistic of order two with kernel $d(\cdot,\cdot)\in[0,1]$. By Hoeffding's concentration inequality for bounded U-statistics,
\begin{equation}
\mathbb{P}\!\left(
\left|D_N(\mathcal{X})-\mu(c)\right|\ge t
\right)
\le
2\exp\!\left(
-2\left\lfloor \frac{N}{2}\right\rfloor t^2
\right).
\label{eq:app_full_u_stat_concentration}
\end{equation}

Next, consider $\overline D_{M,K}(\mathcal{X},\Pi)$. Because the samples are i.i.d.\ and the partition is independent of the samples, its distribution is the same as that obtained by first drawing $MK$ i.i.d.\ samples and then forming $M$ disjoint blocks of size $K$. Hence
\[
\overline D_{M,K}(\mathcal{X},\Pi)
=
\frac{1}{M}\sum_{m=1}^M U_m,
\]
where $U_1,\dots,U_M$ are independent copies of a bounded order-two U-statistic based on $K$ i.i.d.\ samples with mean $\mu(c)$. For each $U_m$, Hoeffding's inequality yields
\[
\mathbb{P}\!\left(
|U_m-\mu(c)|\ge t
\right)
\le
2\exp\!\left(
-2\left\lfloor \frac{K}{2}\right\rfloor t^2
\right).
\]
Equivalently, each $U_m-\mu(c)$ is sub-Gaussian with variance proxy proportional to $1/\lfloor K/2\rfloor$. Averaging the $M$ independent terms, therefore, gives
\begin{equation}
\mathbb{P}\!\left(
\left|
\overline D_{M,K}(\mathcal{X},\Pi)-\mu(c)
\right|
\ge t
\right)
\le
2\exp\!\left(
-2M\left\lfloor \frac{K}{2}\right\rfloor t^2
\right).
\label{eq:app_small_group_u_stat_concentration}
\end{equation}

Applying the triangle inequality,
\begin{equation}
\left|
\overline D_{M,K}(\mathcal{X},\Pi)-D_N(\mathcal{X})
\right|
\le
\left|
\overline D_{M,K}(\mathcal{X},\Pi)-\mu(c)
\right|
+
\left|
D_N(\mathcal{X})-\mu(c)
\right|.
\label{eq:app_triangle_ineq}
\end{equation}
Setting $t=\varepsilon/2$ in Eqs.~\eqref{eq:app_full_u_stat_concentration} and \eqref{eq:app_small_group_u_stat_concentration}, and then using a union bound, we obtain
\begin{align}
&\mathbb{P}\!\left(
\left|
\overline D_{M,K}(\mathcal{X},\Pi)-D_N(\mathcal{X})
\right|
\ge \varepsilon
\right)
\nonumber\\
&\le
2\exp\!\left(
-\frac{1}{2}
M\left\lfloor \frac{K}{2}\right\rfloor \varepsilon^2
\right)
+
2\exp\!\left(
-\frac{1}{2}
\left\lfloor \frac{N}{2}\right\rfloor \varepsilon^2
\right).
\label{eq:app_union_bound_two_terms}
\end{align}
Since $N=MK$ and $\lfloor N/2\rfloor \ge M\lfloor K/2\rfloor$, the second term is no larger than the first, which gives Eq.~\eqref{eq:app_concentration_tail}. Finally, solving
\[
4\exp\!\left(
-\frac{1}{2}
M\left\lfloor \frac{K}{2}\right\rfloor \varepsilon^2
\right)
\le \delta
\]
yields the sufficient condition in Eq.~\eqref{eq:app_group_size_condition}, and Eq.~\eqref{eq:app_k_lower_bound} follows immediately.
\end{proof}

Proposition~\ref{prop:small_group_concentration} clarifies the role of the group size $K$. The partition-consistency result removes bias at the objective level, while the concentration result shows that increasing $K$ improves the stability of the small-group diversity signal as a proxy for full-sample diversity.

\section{GenMol-P Implementation}
\label{sec:app_genmol_p}

\subsection{Method}

GenMol-P extends GenMol~\citep{lee2025genmol} from unconditional molecular generation to pocket-conditioned molecular generation by adding a continuous pocket prefix to the discrete diffusion language model. Let \(x=(x_1,\ldots,x_T)\) denote the SAFE-token sequence of a ligand and let \(\mathcal C\) denote a protein pocket. In GenMol-P, \(\mathcal C\) is represented by the residue sequence together with backbone coordinates \((N,C_\alpha,C)\) for each pocket residue. A frozen ESM-IF1 encoder~\citep{hsu2022learning} maps these pocket backbone coordinates to residue-level embeddings \(h_1,\ldots,h_L\), where \(L\) is the number of pocket residues. A trainable two-layer projector \(P_\psi\) then maps each residue embedding into the GenMol hidden space, producing prefix vectors \(p_\ell=P_\psi(h_\ell)\in\mathbb R^d\), with \(d=768\). The projector consists of a linear layer, GELU nonlinearity, a second linear layer, and layer normalization.

The projected pocket vectors are inserted as a prefix before the molecular token embeddings. Given corrupted molecule tokens \(\tilde{x}_t\) at diffusion time \(t\), GenMol-P forms the transformer input
\[
    [p_1,\ldots,p_L,e(\tilde{x}_{t,1}),\ldots,e(\tilde{x}_{t,T})],
\]
where \(e(\cdot)\) is the molecular token embedding. The transformer attends jointly over the pocket prefix and molecular positions, but logits are read only from the molecular positions. Thus, the pocket prefix conditions molecular denoising without being treated as tokens to be generated. The total prefix-plus-molecule length is capped at 256 positions, matching the GenMol positional budget; training examples that exceed this budget are excluded during preprocessing. This construction keeps the molecular generation mechanism unchanged while allowing the denoising network to condition on the target pocket through continuous structural context.

\subsection{Training}

GenMol-P is supervised-tuned on CrossDocked2020~\citep{francoeur2020three} pocket-ligand pairs. For each complex, the ligand SMILES string is converted to a SAFE string and tokenized with the GenMol tokenizer. The pocket is converted into a residue-level structural prefix by extracting, or deterministically reconstructing, the \(N\), \(C_\alpha\), and \(C\) backbone coordinates for each pocket residue. We discard examples without an assigned train/validation/test split, examples whose ligand cannot be converted to a nonempty SAFE string, malformed pockets with missing backbone information, and examples whose pocket-prefix plus ligand-token length exceeds 256.

The supervised objective is the same masked discrete diffusion objective as GenMol, with the pocket prefix supplied as additional context. For a training pair \((x,\mathcal C)\), we sample a diffusion time \(t\in[\varepsilon_{\mathrm{time}},1]\) with \(\varepsilon_{\mathrm{time}}=10^{-3}\), corrupt the ligand token sequence through the masked discrete diffusion forward process, and train the model to recover the original SAFE tokens from \((\tilde{x}_t,\mathcal C)\). We use antithetic time sampling within minibatches and optimize the globally averaged molecular-token loss. The ESM-IF1 pocket encoder is frozen throughout training; the GenMol backbone and the pocket projector are trainable. The GenMol backbone is initialized from the pretrained GenMol checkpoint, so supervised tuning learns pocket conditioning while retaining the molecular prior learned by the unconditional generator.

The reported GenMol-P checkpoint is trained on 8 H200 GPUs with bf16 precision, per-device batch size 384, and no gradient accumulation, giving a global batch size of 3072. We use AdamW with learning rate \(3\times 10^{-4}\), \(\beta_1=0.9\), \(\beta_2=0.999\), optimizer \(\epsilon=10^{-8}\), and zero weight decay. The learning-rate schedule is constant after 2500 warmup steps, gradients are clipped at norm 1.0, and an exponential moving average with decay 0.9999 is maintained over the trainable parameters. The original GenMol-P model used in the pocket-conditioned experiments is the verified checkpoint at 5,500 supervised optimization steps, and the subsequent coupled-GRPO and coupled-SGRPO runs are initialized from this checkpoint.

\section{Experimental Implementation Details}

\subsection{\emph{De novo} Small-Molecule Design}
\paragraph{Model.}
The \emph{de novo} small-molecule design experiment uses GenMol~\citep{lee2025genmol} as the base generator. GenMol is a masked discrete diffusion language model over SAFE molecular strings~\citep{noutahi2024gotta}. SAFE represents a molecule as an unordered sequence of fragment blocks: a molecule is decomposed into BRICS fragments, each fragment is written as a contiguous string with its attachment points preserved, and fragments are concatenated with separator tokens. This representation is well matched to non-autoregressive denoising because the molecular identity is insensitive to the order in which fragment blocks are listed. As a result, GenMol can model the whole molecular string bidirectionally and fill multiple masked positions in parallel, instead of committing to a left-to-right token order.

GenMol follows the masked discrete diffusion formulation of masked diffusion language models. For a clean SAFE sequence \(x=(x_1,\ldots,x_L)\), the forward process independently corrupts each token by interpolating between the clean token and a mask token \(m\):
\[
q(z_t^l \mid x_l)
=
\mathrm{Cat}
\left(
    z_t^l;\,
    \alpha_t x_l + (1-\alpha_t)m
\right),
\]
where \(z_t^l\) is the noisy token at diffusion time \(t\), and \(\alpha_t\) decreases from \(\alpha_0=1\) to \(\alpha_1=0\). A BERT-style denoiser \(x_\theta(z_t,t)\) predicts the clean token distribution at each position from the partially masked sequence. It is trained with the masked-diffusion negative-ELBO objective, which can be viewed as a time-weighted masked language modeling loss:
\[
\mathcal L_{\mathrm{diff}}
=
\mathbb E_{x,t,z_t}
\left[
    -\sum_{l:z_t^l=m}
    \log x_\theta^l(x_l \mid z_t,t)
\right].
\]
During generation, GenMol starts from a masked SAFE sequence and simulates the reverse process. Unmasked tokens are kept fixed, while each masked position is predicted from \(x_\theta(z_t,t)\) and sampled with temperature \(\tau\). GenMol then confirms only the most confident predictions and leaves the remaining positions masked for later denoising steps. With randomness \(r\), the confidence score for a sampled token \(\hat x_l\) is
\[
c_t^l
=
\log p_\theta^l(\hat x_l\mid z_t,t)
+
r t \epsilon_l,
\qquad
\epsilon_l\sim\mathrm{Gumbel}(0,1).
\]
This confidence-based parallel unmasking gives GenMol a native diffusion sampler rather than an autoregressive action sequence. All post-training methods in this experiment, therefore, use the coupled-GRPO instantiation described in the main text, which evaluates completed molecules under paired diffusion masks while preserving GenMol's native denoising sampler. The original-model baseline is the pretrained GenMol checkpoint without any RL post-training.

\paragraph{Reward definition.}
The rollout-level utility is the same QED--SA reward used in Section~\ref{subsec:denovo_small_molecule}. For a generated molecule \(x\), we compute QED~\citep{bickerton2012quantifying} and the raw synthetic accessibility score \(\mathrm{SA}(x)\)~\citep{ertl2009estimation}. QED is a normalized drug-likeness score that combines several medicinal-chemistry descriptors, including molecular weight, lipophilicity, hydrogen-bond donors and acceptors, polar surface area, rotatable bonds, aromatic rings, and structural alerts, into a single high-is-better value in \([0,1]\). The raw SA score is a heuristic estimate of synthetic difficulty: it rewards molecules composed of common chemical fragments and penalizes structural complexity, such as large rings, stereochemical complexity, and unusual molecular topology. Since lower raw SA indicates easier synthesis, we use the high-is-better transformation \(s_{\mathrm{SA}}(x)=\mathrm{clip}((6-\mathrm{SA}(x))/5,0,1)\). The rollout utility is \(u(x)=0.6\,\mathrm{QED}(x)+0.4\,s_{\mathrm{SA}}(x)\). For SGRPO, the group-level diversity reward is internal diversity over valid molecules, computed as one minus the mean pairwise Tanimoto similarity between Morgan fingerprints. Morgan fingerprints encode circular atom neighborhoods as binary substructure features, and Tanimoto similarity measures the overlap between two such feature sets. Thus, SGRPO optimizes the same molecule-level quality reward as GRPO, but adds an explicit finite-group estimate of sample diversity.

\paragraph{Baselines.}
We compare SGRPO against three baselines. The first is the pretrained GenMol model, denoted Original, which measures the utility-diversity frontier before RL post-training. The second is coupled-GRPO~\citep{gong2025diffucoder}, denoted GRPO in the small-molecule figures and tables, which uses the rollout utility reward but has no explicit set-level diversity reward. Directly applying GRPO to a masked diffusion generator is nontrivial because a completed sample does not come with the autoregressive factorization normally used to compute token-level policy ratios. Treating the whole completion as fully masked gives a weak likelihood proxy, while independently resampling masks gives high-variance estimates. Coupled-GRPO keeps the GRPO reward and advantage structure but changes how the diffusion policy ratio is estimated.

For a condition \(c\), coupled-GRPO samples a group of completed molecules \(\{x_i\}_{i=1}^G\) and computes a group-relative utility advantage
\[
A_i
=
u(x_i)
-
\frac{1}{G}
\sum_{j=1}^G u(x_j).
\]
The update then uses a PPO-style clipped objective, but the log-probability proxy is computed through coupled diffusion masks. For each completed molecule \(x_i\), coupled-GRPO samples timestep pairs \((t,\hat t)\) with \(t+\hat t=T\), and constructs complementary completion masks \(M_t\) and \(\widehat M_{\hat t}\) such that \(M_t^l+\widehat M_{\hat t}^l=1\) for every completion token \(l\). A compact way to write the corresponding log-probability proxy is
\[
\begin{aligned}
\ell_\theta(x_i;c)
=
\sum_{(t,\hat t)}
\Bigg[
\sum_{l:M_t^l=1}
\log p_\theta(x_i^l \mid c,x_i^{\neg M_t},t)
+
\sum_{l:\widehat M_{\hat t}^l=1}
\log p_\theta(x_i^l \mid c,x_i^{\neg \widehat M_{\hat t}},\hat t)
\Bigg],
\end{aligned}
\]
where \(x_i^{\neg M}\) denotes the visible, unmasked part of the completion under mask \(M\). Thus, each token contributes exactly once across the coupled pair, but is evaluated under a realistic, partially observed context rather than an all-mask context. The policy ratio in the clipped GRPO objective is then formed from this coupled log-probability proxy. Coupled-GRPO therefore preserves the native diffusion denoising interface and reduces the variance of the policy-gradient estimate relative to independently sampled masks; in our baseline, it optimizes only rollout-level utility, without any set-level diversity reward.

The third baseline is memory-assisted coupled-GRPO, denoted Memory-Assisted GRPO, which augments coupled-GRPO with the memory unit of memory-assisted RL~\citep{blaschke2020memory}. We use the compound-similarity version of this memory: it is organized as index--bucket pairs, where each index is a molecule that represents a similarity neighborhood, and each bucket stores the high-scoring generated molecules assigned to that neighborhood. For a newly generated molecule \(x\), the memory is queried only if its utility score exceeds a threshold \(\eta\). The method then compares \(x\) with all indexed molecules using Morgan-fingerprint Tanimoto similarity. If no index has similarity at least \(\gamma\), a new index--bucket pair is created with \(x\), and \(x\) is stored in the new bucket. If \(x\) matches an existing index and the corresponding bucket has not reached its capacity \(C\), \(x\) is added to that bucket and the reward is left unchanged. If the matched bucket is already full, the memory returns zero, and the reward is suppressed before GRPO advantage computation. Thus, Memory-Assisted GRPO encourages diversity indirectly by suppressing reward in historically saturated similarity neighborhoods, rather than by optimizing the internal diversity of the current generated set.

\paragraph{Training details.}
All RL-trained \emph{de novo} small-molecule models start from the same GenMol checkpoint and are trained on 8 H200 GPUs with bf16 precision and seed 42. Training rollouts are sampled with random masking enabled, GenMol randomness \(0.3\), sampling temperature \(1.0\), minimum added length 60, generation batch size 1024, and one training iteration per rollout batch. GRPO uses 512 rollouts per prompt, a per-device training batch size of 1024, learning rate \(5\times 10^{-5}\), Adam \((\beta_1,\beta_2)=(0.9,0.99)\), optimizer \(\epsilon=10^{-8}\), weight decay \(0.1\), maximum gradient norm \(0.2\), KL coefficient \(\beta=0.005\), clipping parameter \(\epsilon_{\mathrm{clip}}=0.5\), cosine learning-rate decay with minimum learning-rate ratio \(0.1\), and warmup ratio \(10^{-4}\). The reference model is synchronized every 64 steps with a mixup coefficient \(0.6\). The reported GRPO model is the 2,000-step checkpoint. Memory-Assisted GRPO uses the same GRPO settings and additionally enables the history-based memory with bucket size 25, score threshold \(0.9\), and similarity cutoff \(0.4\); the reported model is also the 2,000-step checkpoint. SGRPO uses the same optimizer, KL, clipping, scheduler, and training-sampling settings, but keeps the total rollout budget matched by forming 8 same-condition groups with 64 rollouts per group. We set the group-reward weight to \(\lambda=0.5\) and use leave-one-out group credit at temperature \(1.0\). The reported SGRPO model is the 2,000-step checkpoint.

\paragraph{Evaluation protocol.}
For the main Pareto curve, we evaluate the four models under the paired GenMol decoding sweep \((\rho,\tau)\in\{(0.1,0.5),(0.3,0.8),(0.5,1.1),(0.7,1.4),(0.9,1.7),(1.0,2.0)\}\), where \(\rho\) is GenMol decoding randomness and \(\tau\) is sampling temperature. Each model generates 1,000 molecules at each sweep point. QED, transformed SA, and utility are averaged over valid generated molecules at that sweep point, and diversity is computed over the same valid set using Morgan-fingerprint Tanimoto distance. In practice, all generated molecules were valid under our decoding and postprocessing pipeline for every method and sweep point, so the valid set coincides with the full generated set in this experiment. The frontier metrics in Table~\ref{tab:frontier_metrics} are computed from these six operating points for each method.

\subsection{Pocket-Based Small-Molecule Design}
\paragraph{Model.}
The pocket-based small-molecule experiment uses GenMol-P, the pocket-conditioned extension of GenMol described in Appendix~\ref{sec:app_genmol_p}. GenMol-P conditions molecular denoising on a protein pocket by prepending a continuous pocket prefix, obtained from frozen ESM-IF1 pocket embeddings and a trainable projector, before the SAFE-token embeddings. The original GenMol-P model used in this experiment is the supervised CrossDocked2020 checkpoint at 5,500 optimization steps. All RL post-training methods start from this same checkpoint and use the coupled-GRPO family of objectives because the generator remains a discrete diffusion model.

\paragraph{Reward definition.}
The pocket-conditioned utility extends the \emph{de novo} QED--SA reward with a target-dependent docking term. The QED and transformed SA components retain the same interpretation as in the unconditional small-molecule task: QED favors drug-like physicochemical profiles, while \(s_{\mathrm{SA}}\) favors molecules estimated to be easier to synthesize. For ligand \(x\) and pocket \(\mathcal C\), we define \(u(x,\mathcal C)=0.3\,\mathrm{QED}(x)+0.2\,s_{\mathrm{SA}}(x)+0.5\,s_{\mathrm{dock}}(x,\mathcal C)\), where \(s_{\mathrm{SA}}(x)=\mathrm{clip}((6-\mathrm{SA}(x))/5,0,1)\). The docking term is derived from AutoDock Vina~\citep{trott2010autodock}, which searches ligand poses in the pocket and scores them with an empirical approximation to binding affinity. If \(a_{\mathrm{Vina}}(x,\mathcal C)\) is the raw Vina affinity, lower and more negative values indicate stronger predicted binding, so we use the high-is-better transformation \(s_{\mathrm{dock}}(x,\mathcal C)=\mathrm{clip}(-a_{\mathrm{Vina}}(x,\mathcal C)/10,0,1)\). Diversity is condition-specific: for each pocket, we compute internal diversity among the ligands generated for that same pocket using the Morgan-fingerprint Tanimoto distance, and then average this value over pockets. This avoids rewarding diversity across unrelated targets, where chemical dissimilarity may simply reflect different pocket requirements rather than useful within-pocket diversity.

\paragraph{Baselines.}
We compare SGRPO against the original supervised GenMol-P model and a coupled-GRPO model trained with the same QED--SA--Vina rollout utility. We do not include a memory-assisted GRPO baseline in the main pocket-based comparison. The memory mechanism stores historically generated candidates across optimization steps, but the scientific diversity objective in this task is pocket-conditional: diversity should be encouraged among ligands for the same pocket, not across ligands generated for unrelated pockets. Making a memory-assisted baseline operate at the pocket level would require repeated optimization steps on the same pocket so that the memory becomes meaningful, which would change the data exposure and make the comparison no longer matched.

\paragraph{Training details.}
Both pocket-conditioned RL methods are trained on 8 H200 GPUs with bf16 precision from the same 5,500-step GenMol-P checkpoint and seed 42. Training rollouts are sampled with random masking enabled, GenMol randomness \(0.3\), sampling temperature \(1.0\), minimum added length 60, generation batch size 384, and one training iteration per rollout batch. Coupled-GRPO uses 192 rollouts per pocket condition, per-device training batch size 384, learning rate \(5\times 10^{-5}\), Adam \((\beta_1,\beta_2)=(0.9,0.99)\), optimizer \(\epsilon=10^{-8}\), weight decay \(0.1\), maximum gradient norm \(0.2\), KL coefficient \(\beta=0.005\), clipping parameter \(\epsilon_{\mathrm{clip}}=0.5\), cosine learning-rate decay with minimum learning-rate ratio \(0.1\), and warmup ratio \(10^{-4}\). The reference model is synchronized every 64 steps with a mixup coefficient \(0.6\). AutoDock Vina rewards are computed with fast search mode, one output pose, docking batch size 384, and a timeout of 1800 seconds. The reported GRPO checkpoint is trained for 1,000 steps. Coupled-SGRPO keeps the total rollout budget matched by forming 8 same-pocket groups with 24 rollouts per group. It uses the same optimizer, KL, clipping, scheduler, docking, and training-sampling settings as GRPO, sets the group-reward weight to \(\lambda=0.9\), and uses leave-one-out group credit at temperature \(1.0\). The reported SGRPO checkpoint is also trained for 1,000 steps.

\paragraph{Evaluation protocol.}
The main pocket-based Pareto curve uses pockets from the CrossDocked2020 test set. We evaluate each model under the same paired \((\rho,\tau)\) sweep as the \emph{de novo} small-molecule experiment: \((0.1,0.5)\), \((0.3,0.8)\), \((0.5,1.1)\), \((0.7,1.4)\), \((0.9,1.7)\), and \((1.0,2.0)\). At each sweep point, each model generates 16 ligands for each of 100 test pockets, giving 1,600 ligand samples per method and sweep point. We scored generated ligands with QED, transformed SA, and AutoDock Vina. Utility metrics are averaged over valid generated ligands, while diversity is computed within each pocket’s 16 ligands and then averaged over the 100 pockets. In practice, all generated ligands were valid under our decoding and postprocessing pipeline for every method and sweep point, so validity does not affect the frontier comparison in this experiment. The frontier metrics in Table~\ref{tab:frontier_metrics} are computed from the resulting six utility-diversity operating points.

\subsection{\emph{De novo} Protein Design}
\paragraph{Model.}
The \emph{de novo} protein design experiment uses ProGen2-small~\citep{nijkamp2023progen2}, an autoregressive protein language model, as the base generator. ProGen2 models raw protein sequences with a decoder-only Transformer trained for next-token prediction. For an amino-acid sequence \(y=(a_1,\ldots,a_L)\), the model defines a left-to-right sequence distribution
\[
    p_\theta(y)
    =
    \prod_{t=1}^{L+1}
    p_\theta(a_t \mid a_{<t}),
\]
where \(a_{L+1}\) denotes the terminal token. Pretraining minimizes the corresponding negative log-likelihood over large collections of unaligned protein sequences, so generation is performed by repeatedly sampling the next amino-acid token from \(p_\theta(\cdot\mid a_{<t})\) until the terminal token or a maximum length is reached.

Architecturally, ProGen2 is a causal Transformer decoder with rotary positional encodings and a parallelized residual block in which self-attention and the feed-forward network are applied to the same normalized hidden state.
\[
    h^{(m+1)}
    =
    h^{(m)}
    +
    \mathrm{Attn}(\mathrm{LN}(h^{(m)}))
    +
    \mathrm{MLP}(\mathrm{LN}(h^{(m)})).
\]
The ProGen2 family scales this architecture across model sizes; we use the official ProGen2-small checkpoint, which has 151M parameters, 12 layers, 16 attention heads, head dimension 64, and context length 1024. This checkpoint is pretrained on the standard ProGen2 mixture of UniRef90 and BFD30 sequences, giving it broad coverage of natural protein sequence statistics. The original baseline is this checkpoint evaluated without RL post-training. Since ProGen2 provides an explicit token-level likelihood for every generated sequence, SGRPO can be instantiated with standard GRPO rather than the coupled-GRPO estimator required for diffusion generators. All post-training methods generate unconditional amino-acid sequences from the same prompt set and are evaluated at the 100-step checkpoint used in the main comparison.

\paragraph{Reward definition.}
The sequence-level utility combines four normalized protein-design scores, each targeting a different failure mode of unconstrained protein generation. Naturalness is computed from the average per-token log-likelihood under ESM2, so sequences that look more plausible under a large protein language model receive higher scores. Foldability is the ESMFold mean pLDDT score divided by 100; pLDDT is a per-residue confidence estimate, so this term favors sequences whose predicted structures are internally confident. Stability is based on TemBERTure-predicted melting temperature, which serves as a sequence-based proxy for thermal robustness. Developability combines Protein-Sol solubility with simple liability filters, so it penalizes sequences that may be hard to express or handle experimentally. Naturalness and stability are quantile-normalized using calibration sequences, while foldability and developability are already on a \([0,1]\)-compatible scale. The rollout utility is \(u(y)=0.25\,r_{\mathrm{nat}}(y)+0.30\,r_{\mathrm{fold}}(y)+0.20\,r_{\mathrm{stab}}(y)+0.25\,r_{\mathrm{dev}}(y)\). For developability, the underlying score is \(r_{\mathrm{dev}}(y)=0.8\,r_{\mathrm{sol}}(y)+0.2\,r_{\mathrm{liability}}(y)\), where \(r_{\mathrm{sol}}\) is the Protein-Sol score clipped to \([0,1]\) and \(r_{\mathrm{liability}}\) penalizes transmembrane-like hydrophobicity, low sequence complexity, long hydrophobic runs, and cysteine outlier frequency. Sequence diversity is measured as one minus the average normalized Levenshtein similarity over valid generated sequences; the normalized Levenshtein similarity compares two sequences by their edit distance after accounting for sequence length, so lower similarity corresponds to larger sequence-level variation.

\paragraph{Baselines.}
We compare SGRPO against the original ProGen2-small generator, GRPO, and Memory-Assisted GRPO. GRPO optimizes only the rollout-level protein utility. Memory-Assisted GRPO uses the same rollout utility but adds the history-based diversity memory over generated amino-acid sequences; sequence similarity in this memory is normalized Levenshtein similarity. The low diversity of Memory-Assisted GRPO in the protein experiment should be interpreted as a consequence of its history-relative novelty mechanism rather than a difference in training budget: all RL methods start from the same ProGen2 checkpoint, use the same rollout utility and matched GRPO training settings, and are evaluated under the same temperature sweep.

\paragraph{Training details.}
All ProGen2 post-training runs are trained on 8 H200 GPUs and use the official ProGen2-small checkpoint as both the initial policy and the initial reference model. GRPO uses group size 96, per-device training batch size 192, gradient accumulation 1, maximum generation length 256, top-\(p=0.95\), training sampling temperature \(0.8\), learning rate \(5\times 10^{-5}\), Adam \((\beta_1,\beta_2)=(0.9,0.999)\), optimizer \(\epsilon=10^{-8}\), zero weight decay, maximum gradient norm \(1.0\), KL coefficient \(\beta=0.01\), and clipping parameter \(\epsilon_{\mathrm{clip}}=0.2\). Memory-Assisted GRPO uses the same training settings and enables the history-based diversity memory with bucket size 25, score threshold \(0.6\), and similarity cutoff \(0.6\). In each supergroup, SGRPO keeps the rollout budget matched by forming 8 groups with 12 sequences per group. It uses the same optimizer, generation, KL, and clipping settings as GRPO, sets the group-reward weight to \(\lambda=0.8\), and uses leave-one-out group credit at temperature \(1.0\). For all three post-training methods, reward calibration uses 1,024 generated sequences. Naturalness, stability, and developability rewards are computed at every step, while the more expensive foldability reward is computed every 4 steps. Each model is trained 100 steps.

\paragraph{Evaluation protocol.}
The protein Pareto curve is evaluated with a temperature sweep \(\tau\in\{0.1,0.2,\ldots,1.0,1.1,1.2\}\). At each temperature, each model generates 512 sequences. All generated sequences consisted of valid amino-acid tokens and passed our evaluation preprocessing checks, so validity was 100\% for all methods across the full temperature sweep. Naturalness and stability are calibrated once using a fixed set of 256 calibration sequences. The resulting normalization parameters are then held fixed for all models and all temperatures in the sweep.
The plotted utility is the weighted utility defined above, and the frontier metrics in Table~\ref{tab:frontier_metrics} are computed from the 12 temperature-specific operating points for each method.





\section{Discussion and Future Work}

SGRPO trades an additional rollout structure for a direct set-level diversity signal. Unlike diversity-agnostic RL, SGRPO must sample groups of candidates under the same conditions in order to estimate and compare group diversity. Our analysis shows that SGRPO remains effective with a near-minimal group size, but the method can still require more rollouts than objectives that score each candidate independently. Improving the efficiency of finite-group diversity estimators, or reusing rollouts across compatible diversity computations, is therefore an important direction for making SGRPO cheaper at larger scales.

SGRPO also adds the cost of computing the diversity reward and the leave-one-out diversity contribution. For the pairwise-similarity diversity objectives used in our experiments, this cost is dominated by constructing the within-group similarity matrix. Let a supergroup contain \(M\) same-condition groups, each with \(K\) rollouts, and let \(C_{\mathrm{sim}}\) denote the cost of one similarity evaluation. Computing all within-group pairs requires \(O(MK^2 C_{\mathrm{sim}})\) work, or \(M K(K-1)/2\) pair evaluations up to constants. The same similarity matrix is then reused to compute both the group diversity reward and all leave-one-out contributions, so leave-one-out credit assignment does not require a second pass over the pairwise similarities. Our implementation follows this reuse pattern to avoid redundant similarity computation.

This overhead should be interpreted relative to other diversity-aware training mechanisms. Under the matched-rollout protocol used in our experiments, Memory-Assisted GRPO evaluates the same \(MK\) generated candidates but compares high-scoring candidates against a memory with \(B\) index--bucket pairs. Its memory lookup therefore scales as \(O(MKB C_{\mathrm{sim}})\) in the number of stored similarity neighborhoods. Once \(B\) is group-scale or larger, this lookup can exceed the pairwise group-diversity computation. In the \emph{de novo} small-molecule run, for example, SGRPO uses \(K=64\) and \(M=8\), while the memory already contains more than 300 index--bucket pairs after the first training step. 

The diversity-computation overhead is nevertheless real. For pairwise-similarity diversity, beyond the similarity-matrix reuse used here, future implementations could cache pairwise similarities across repeated candidates or cache reusable metric-specific computations such as fingerprints, embeddings, or nearest-neighbor structures. For diversity notions that are not naturally pairwise-similarity based, the appropriate efficiency strategy may be different: GPU-batched diversity computation, differentiable or learned proxy metrics, or approximate set summaries may be needed to keep set-level rewards practical at larger rollout scales.

Our experiments evaluate the utility-diversity trade-off with respect to the scalar utility specified for each task. These utilities aggregate several design-relevant components, such as drug-likeness, synthesizability, docking affinity, foldability, stability, and developability, using task-specific weights. This scalarization is the shared interface through which a design preference is provided to reward-based post-training: the method is given a utility function and a diversity metric, and the evaluation asks whether the attainable trade-off between them improves. Questions about how to choose, calibrate, or audit the internal utility components are important reward-design questions shared by all reward-based post-training methods. They are complementary to the contribution studied here: given a user-specified utility and diversity metric, SGRPO directly improves the attainable trade-off between them. Although the scope of this work treats utility and diversity as user-specified evaluation axes, a natural extension is to expose the utility components themselves as additional controllable axes, allowing future SGRPO variants to study diversity jointly with more fine-grained within-utility trade-offs.

SGRPO is intentionally decoupled from a specific generator, task, utility objective, or diversity metric. This makes it easy to instantiate through different GRPO-style optimizers, but it also leaves room for more specialized designs. Future work could use the same supergroup-relative framework with task-specific diversity notions, condition-aware chemical series constraints, structure-aware protein diversity metrics, or adaptive group construction rules that better match the scientific objective of a particular design campaign.